\newcommand{\squishlist}{
\begin{list}{$\bullet$}
 {  \setlength{\itemsep}{0pt}
    \setlength{\parsep}{3pt}
     \setlength{\topsep}{3pt}
     \setlength{\partopsep}{0pt}
     \setlength{\leftmargin}{2em}
     \setlength{\labelwidth}{1.5em}
     \setlength{\labelsep}{0.5em}
} }
\newcommand{\squishlisttight}{
 \begin{list}{$\bullet$}
  { \setlength{\itemsep}{0pt}
    \setlength{\parsep}{0pt}
    \setlength{\topsep}{0pt}
    \setlength{\partopsep}{0pt}
    \setlength{\leftmargin}{2em}
    \setlength{\labelwidth}{1.5em}
    \setlength{\labelsep}{0.5em}
} }
\newcommand{\squishdesc}{
 \begin{list}{}
  {  \setlength{\itemsep}{0pt}
     \setlength{\parsep}{3pt}
     \setlength{\topsep}{3pt}
     \setlength{\partopsep}{0pt}
     \setlength{\leftmargin}{1em}
     \setlength{\labelwidth}{1.5em}
     \setlength{\labelsep}{0.5em}
} }
\newcommand{\squishend}{
  \end{list}
}
\newcommand{\comment}[1]{}
\newcommand{\qed}{\hfill $\Box$}
\newtheorem{problem}{Problem}
\newtheorem{observation}{Observation}
\newcommand{\spara}[1]{\smallskip\noindent{\bf{#1}}}
\newcommand{\Reals}{\ensuremath{\mathbb{R}}}
\newcommand{\NP}{\ensuremath{\mathbf{NP}}}
\newcommand{\bigO}{\ensuremath{{\cal O}}}
\newcommand{\Property}{\ensuremath{\rho}} 
\newcommand{\indicator}{\ensuremath{\mathbb{I}}}
\newcommand{\stars}{\ensuremath{s}}
\newcommand{\density}{\ensuremath{d}}
\newcommand{\connectivity}{\ensuremath{c}}
\newcommand{\abs}[1]{{\left|#1\right|}}
\newcommand{\set}[1]{\left\{#1\right\}}
\newcommand{\funcdef}[3]{{#1}:{#2} \to {#3}}
\newcommand{\sets}{\ensuremath{\mathcal{S}}}
\newcommand{\communities}{\ensuremath{\mathcal{C}}}
\newcommand{\community}{\ensuremath{C}}
\newcommand{\avg}{\ensuremath{\mathrm{avg}}}
\newcommand{\edges}{\ensuremath{E}}
\newcommand{\indedges}{\ensuremath{E_0}}
\newcommand{\indgraph}{\ensuremath{G_0}}
\newcommand{\sparseG}{\ensuremath{G^{*}}}
\newcommand{\sparseedges}{\ensuremath{E^{*}}}
\newcommand{\ratio}{\ensuremath{\rho}}
\newcommand{\reldegree}{\ensuremath{\delta}}
\newcommand{\relSPlen}{\ensuremath{\lambda}}
\newcommand{\intersection}{\ensuremath{\mathsf{I}}}
\newcommand{\union}{\ensuremath{\mathsf{U}}}
\newcommand{\densitypotential}{\ensuremath{\Phi}}
\newcommand{\condensitypotential}{\ensuremath{\Psi}}
\newcommand{\conpotential}{\ensuremath{U}}
\newcommand{\cc}{\ensuremath{\mathit{cc}}}
\newcommand{\setnetworkprb}{\textsc{Net\-Sparse}\xspace}
\newcommand{\weightedproblem}{\textsc{Weighted\-Net\-Sparse}\xspace}
\newcommand{\connectedprb}{\textsc{Sparse\-Conn}\xspace}
\newcommand{\densprb}{\textsc{Sparse\-Dens}\xspace}
\newcommand{\hitprb}{\textsc{Hitting\-Set}\xspace}
\newcommand{\starprb}{\textsc{Sparse\-Stars}\xspace}
\newcommand{\distarprb}{\textsc{Sparse\-Di\-Stars}\xspace}
\newcommand{\matchprb}{\textsc{3D-Matching}\xspace}
\newcommand{\hyperprb}{\textsc{Hyper\-edge\-Matching}\xspace}
\newcommand{\hypergreedy}{\texttt{HGreedy}}
\newcommand{\stargreedy}{\texttt{SGreedy}}
\newcommand{\starrandom}{\texttt{SRandom}}
\newcommand{\matching}{\texttt{DS2S}}
\newcommand{\densitygreedy}{\texttt{DGreedy}}
\newcommand{\densityrandom}{\texttt{DRandom}}
\newcommand{\densitysort}{\texttt{DSort}}
\newcommand{\LS}{\texttt{LS}}
\newcommand{\amazon}{{\textsl{Amazon}}\xspace}
\newcommand{\DBLP}{{\textsl{DBLP}}\xspace}
\newcommand{\youtube}{{\textsl{Youtube}}\xspace}
\newcommand{\cocktails}{{\textsl{Cocktails}}\xspace}
\newcommand{\birds}{{\textsl{Birds}}\xspace}
\newcommand{\KDD}{{\textsl{KDD}}\xspace}
\newcommand{\ICDM}{{\textsl{ICDM}}\xspace}
\newcommand{\FBcirc}{{\textsl{FB-circles}}\xspace}
\newcommand{\FBfeat}{{\textsl{FB-features}}\xspace}
\newcommand{\lastFMart}{{\textsl{lastFM-artists}}\xspace}
\newcommand{\lastFMtag}{{\textsl{lastFM-tags}}\xspace}
\newcommand{\DBbook}{{\textsl{DB-bookmarks}}\xspace}
\newcommand{\DBtag}{{\textsl{DB-tags}}\xspace}
\newcommand{\dataset}[1]{{\textsl{D#1}}\xspace}
\newcommand{\etal}{{et al.}}
\tikzset{multicircle/.style  args={#1, #2}{%
 alias=tmp@name, % 
  postaction={%
    insert path={
     \pgfextra{% 
     \pgfpointdiff{\pgfpointanchor{\pgf@node@name}{center}}%
                  {\pgfpointanchor{\pgf@node@name}{east}}%            
     \pgfmathsetmacro\insiderad{\pgf@x}%
     %\foreach \c [count=\ci from = 0, evaluate=\ci as \angle using 360 - (\ci) * #1] in {#2}%
        \fill[white] (\pgf@node@name.center)  circle (\insiderad-\pgflinewidth);%
        \draw[#2] (\pgf@node@name.center)  circle (\insiderad-\pgflinewidth);%
        \fill[#2] (\pgf@node@name.center)  -- ++(0:\insiderad-\pgflinewidth) arc (0:#1:\insiderad-\pgflinewidth)--cycle;%
        }}}}}
\definecolor{yafaxiscolor}{rgb}{0.3, 0.3, 0.3}
\definecolor{yafcolor1}{rgb}{0.4, 0.165, 0.553}
\definecolor{yafcolor2}{rgb}{0.949, 0.482, 0.216}
\definecolor{yafcolor3}{rgb}{0.47, 0.549, 0.306}
\definecolor{yafcolor4}{rgb}{0.925, 0.165, 0.224}
\definecolor{yafcolor5}{rgb}{0.141, 0.345, 0.643}
\definecolor{yafcolor6}{rgb}{0.965, 0.933, 0.267}
\definecolor{yafcolor7}{rgb}{0.627, 0.118, 0.165}
\definecolor{yafcolor8}{rgb}{0.878, 0.475, 0.686}
\definecolor{yafcolor9}{rgb}{0.965, 0.733, 0.767}
\newlength{\yafaxispad}
\newlength{\yaftlpad}
\newlength{\yaflabelpad}
\newlength{\yafaxiswidth}
\newlength{\yafticklen}
\def\pgfplots@drawtickgridlines@INSTALLCLIP@onorientedsurf#1{}
\newcommand{\yafdrawxaxis}[2]{
    \pgfplotstransformcoordinatex{#1}\let\xmincoord=\pgfmathresult 
    \pgfplotstransformcoordinatex{#2}\let\xmaxcoord=\pgfmathresult 
    \pgfsetlinewidth{\yafaxiswidth} 
    \pgfsetcolor{yafaxiscolor}
    \pgfpathmoveto{\pgfpointadd{\pgfpointadd{\pgfplotspointrelaxisxy{0}{0}}{\pgfqpointxy{\xmincoord}{0}}}{\pgfqpoint{-0.5\yafaxiswidth}{\yafaxispad}}}
    \pgfpathlineto{\pgfpointadd{\pgfpointadd{\pgfplotspointrelaxisxy{0}{0}}{\pgfqpointxy{\xmaxcoord}{0}}}{\pgfqpoint{0.5\yafaxiswidth}{\yafaxispad}}}
    \pgfusepath{stroke}

}
\newcommand{\yafdrawyaxis}[2]{
    \pgfplotstransformcoordinatey{#1}\let\ymincoord=\pgfmathresult 
    \pgfplotstransformcoordinatey{#2}\let\ymaxcoord=\pgfmathresult 
    \pgfsetlinewidth{\yafaxiswidth} 
    \pgfsetcolor{yafaxiscolor}
    \pgfpathmoveto{\pgfpointadd{\pgfpointadd{\pgfplotspointrelaxisxy{0}{0}}{\pgfqpointxy{0}{\ymincoord}}}{\pgfqpoint{\yafaxispad}{-0.5\yafaxiswidth}}}
    \pgfpathlineto{\pgfpointadd{\pgfpointadd{\pgfplotspointrelaxisxy{0}{0}}{\pgfqpointxy{0}{\ymaxcoord}}}{\pgfqpoint{\yafaxispad}{0.5\yafaxiswidth}}}
    \pgfusepath{stroke}
}
\newcommand{\yafdrawaxis}[4]{\yafdrawxaxis{#1}{#2}\yafdrawyaxis{#3}{#4}}
\pgfplotsset{axis y line=left, axis x line=bottom,
    tick align=outside,
    compat = 1.3,
    tickwidth=\yafticklen,
    clip = false,
    every axis title shift = 0pt,
    x axis line style= {-, line width = 0pt, opacity = 0},
    y axis line style= {-, line width = 0pt, opacity = 0},
    x tick style= {line width = \yafaxiswidth, color=yafaxiscolor, yshift = \yafaxispad},
    y tick style= {line width = \yafaxiswidth, color=yafaxiscolor, xshift = \yafaxispad},
    x tick label style = {font=\scriptsize, yshift = \yaftlpad},
    y tick label style = {font=\scriptsize, xshift = \yaftlpad},
    every axis y label/.style = {at = {(ticklabel cs:0.5)}, rotate=90, anchor=center, font=\scriptsize, yshift = -\yaflabelpad},
    every axis x label/.style = {at = {(ticklabel cs:0.5)}, anchor=center, font=\scriptsize, yshift = \yaflabelpad},
    x tick label style = {font=\scriptsize, yshift = 1pt},
    grid = major,
    major grid style  = {dash pattern = on 1pt off 3 pt},
    every axis plot post/.append style= {line width=\yafaxiswidth} ,
    legend cell align = left,
    legend style = {inner ysep = 1pt, inner xsep = 3pt, cells = {font=\scriptsize}},
    legend image code/.code={%
        \draw[mark repeat=2,mark phase=2,#1] 
        plot coordinates { (0cm,0cm) (0.2cm,0cm) (0.4cm,0cm) };% 
    } 
}
\begin{document}
\clubpenalty=10000
\widowpenalty = 10000

\title{Community-aware network sparsification}

\author{
Aristides Gionis\quad Polina Rozenshtein\quad Nikolaj Tatti  \\
% Department of Computer Science \\
Aalto University \\ Espoo, Finland \\
\texttt{firstname.lastname@aalto.fi} \\
\and
Evimaria Terzi \\ 
Boston University \\ Boston, USA \\
\texttt{evimaria@bu.edu} \\
}
\date{}

\maketitle

%\note{decide on ``network'' vs.\ ``graph''.}

% Target structure for NIPS

% \begin{tabular}{ll}
% \hline
% section & \# pages \\ \hline
% abstract and intro & $1.5$ \\
% problem definition &  $0.5$ \\
% density &  $1.5$ \\
% star &  $1.5$ \\
% experiments & $2$ \\
% rel work and conclusions & $1$ \\
% references & $1$ \\ \hline
% \end{tabular}

\begin{abstract}\small\baselineskip=9pt
Network sparsification aims to reduce the number of edges of a network while maintaining its
structural properties:  
shortest paths, cuts, spectral measures, or network modularity. 
Sparsification has multiple applications, such as, 
speeding up graph-mining algorithms, graph visualization, 
as well as identifying the important network edges.

In this paper, we consider a novel formulation 
of the network-sparsification problem. 
In addition to the network, 
we also consider as input a set of communities. 
The goal is to sparsify the network so as to preserve the network structure
with respect to the given communities.
% This problem is motivated by applications in which 
% one is interested in studying the interplay
% between network structure and content-induced communities.  
%
We introduce % and study 
two variants
of the community-aware sparsification problem, 
leading to sparsifiers that satisfy different {\em connectedness} community properties. 
% To the best of our knowledge, we are the first to define these problems.
From the technical point of view, we prove hardness results and devise 
effective approximation algorithms. 
Our experimental results on a large collection of datasets demonstrate the effectiveness of our algorithms.

\end{abstract}

\section{Introduction}
\label{sec:intro}

Large graphs, or networks, arise in many applications, 
e.g., social networks, information networks, and biological networks.
Real-world networks are usually sparse,
meaning that the actual number of edges in the network $m$
is much smaller than $\bigO(n^2)$, 
where $n$ is the number of network nodes. 
Nonetheless, in practice, 
it is common to work with networks whose average degree is in the order of hundreds or thousands, 
leading to many computational and data-analysis challenges.
 
Sparsification is a fundamental operation
that aims to reduce the number of edges of a network
while maintaining its structural properties.
Sparsification has numerous applications, such as, 
graph summarization and visualization,
speeding up graph algorithms,  and
identification of important edges.
% and allowing to build more accurate data models by preventing over-fitting.
%
A number of different sparsification methods have been proposed, 
depending on the network property that one aims to preserve. 
Typical properties include 
paths and connectivity~\cite{elkin2005approximating,zhou10network},
cuts~\cite{ahn12graph,fung11general},
and
spectral properties~\cite{ahn13spectral,batson13spectral,spielman11graph}.
%approaches include
%{\em spanner graphs}~\cite{ahn12graph}, 
%where the goal is to maintain shortest-path distances, 
%{\em cut sparsifiers}~\cite{ahn12graph,fung11general}, 
%where we ask to approximately maintain graph cuts, 
%and
%{\em spectral sparsifiers}~\cite{ahn13spectral,batson13spectral,spielman11graph}, 
%where the goal is to preserve spectral properties of a graph, 
%such as approximating the effective resistance between two network nodes. 
%More relevant to our work, 
%Arenas et al.~\cite{arenas2007size}
%are interested on reducing the size of a network
%while maintaining {\em modularity}, 
%a measure used to discover {\em communities}. 
%However, as has been observed in many studies, 
%real-world networks do not have small cuts at large resolution and 
%do not exhibit a clear community structure~\cite{leskovec-mahoney-gleich}.
%Instead, communities are highly overlapping and often 
%can be more accurately identified by incorporating additional information
%into the discovery process, 
%such as node and edge attributes or temporal information.

Existing work on network sparsification ignores the fact that 
the observed network is the result of different latent factors. 
For instance, imagine a Facebook user who posts a high-school photo, 
which leads to a discussion thread among old high-school friends.
In this case, the participation of users in a discussion group
is a result of an underlying community. 
In general, the network structure reflects 
a number of underlying (and potentially overlapping) communities.
Thus, if it is this community structure that guides the network-formation process, 
then the community structure should
also be taken into consideration in network sparsification.

Motivated by this view, 
we consider the following problem:
{\em 
Given a network and a set of potentially over\-lap\-ping communities,
sparsify the network so as to preserve its structural properties
with respect to the given communities.}
Our goal is to find a small set of edges
that best summarize, or explain, a given commu\-nity structure in the network.

Our setting has many interesting applications. 
For example, consider a social network where users discuss various topics. 
Each topic defines a community of people  
interested in the topic. 
Given a set of topics, 
we want to find a sparse network that best explains the respective communities. 
Similar problems arise in collaboration networks, 
where communities are defined by collaboration themes, 
consumer networks where communities are defined by products, etc.
Finding an optimal sparse network with respect to a set of communities
is a means of understanding the interplay between network structure and 
content-induced communities.  

\iffalse
For example, by comparing
and visualizing the
sparsified versions of the network induced by different sets of communities 
one can infer interesting connections 
(or lack thereof)
between the sets of communities themselves.  
Moreover, the analysis and visualization of 
the sparsified version of the network removes
unnecessary complexity and leads to improved data understanding. 
\fi 

We formalize the above intuition by defining the  {\setnetworkprb} problem:
given an undirected graph $G=(V,E)$
and a set of communities $\communities = \{\community_1,\ldots ,\community_\ell\}$ over $V$, 
we ask to find a sparsified graph $G'=(V',E')$
with $V'=\cup_{i=1,\ldots ,\ell}\community_i$ and $E'\subseteq E$, 
so as to minimize  
$|E'|$ and guarantee that
every graph $G'[\community_i]$, induced by the nodes in the community $\community_i$,
satisfies a certain \emph{connectedness requirement}.  

Different connectedness requirements give rise to different variants of the 
{\setnetworkprb} problem.
We consider three such requirements: 
$(i)$ \emph{connectivity}, $(ii)$ \emph{density} and $(iii)$ \emph{star containment}.
While connectivity has been addressed by previous work~\cite{angluin13connectivity},
we are the first to introduce and study the other two properties,
which define the {\densprb} and {\starprb} problems, respectively.
In the {\densprb} problem the requirement is that 
each induced graph $G'[\community_i]$ has a minimum density requirement. 
In the {\starprb} problem the requirement is that $G'[\community_i]$ 
contains a star as a subgraph. 
We establish the computa\-tional complexity of the two problems, {\densprb} and {\starprb},  
%  of both {\densprb} and {\starprb} problems,
and present approximation algorithms for solving them.

% We validate our methods with experiments on a large collection of real-world datasets, 
% with known sets of communities.
% Our experimental results demonstrate that our algorithms effectively sparsify the
% underlying network while maintaining the required community structure and 
% other key graph properties of the original graph.

% \note[ET]{Shall we address the issue of feasibility here?}

An interesting special case arises when
the input to our problem consists only of the collection of communities
and there is no network $G=(V,E)$.
In this case, we can consider that $G$ is the complete graph (clique) 
and the {\setnetworkprb} becomes a \emph{network design} problem, 
where the goal is to construct a network that satisfies the connectedness requirement among
the nodes in the different communities.

\iffalse
The rest of the paper is organized as follows.
We start by introducing our notation and formally define
the general {\setnetworkprb} problem in Section~\ref{sec:prel}.
In Sections~\ref{sec:density} and~\ref{sec:star} we study the problem for the two properties of interest,
density and star containment, 
and in Section~\ref{sec:weighted} we discuss its extension to weighted networks.
% in each case we establish the complexity of the problem and provide approximation algorithms.
The experimental evaluation of our algorithms is described in Section~\ref{sec:exp}.
Related work is discussed in Section~\ref{sec:related} and Section~\ref{sec:conclusions} is a short conclusion.
All the proofs are provided in the Appendix.
\fi

The list of our contributions is the following.
\squishlist
\item We introduce the novel problem
of sparsifying a network while
preserving the structure of a given set of communities.
\item 
We formulate different variants of this
{\em network-aware sparsification} task, 
by considering preserving connectedness properties within communities.
\item 
For the proposed formulations
we present complexity results 
and efficient approximation algorithms.
\item 
We present experimental results on a large collection of real datasets, 
demonstrating that our algorithms effectively sparsify the
underlying network while maintaining the required community structure and 
other key properties of the original graph.
\squishend

We note that our implementation and datasets will be publicly available. 
Proofs, other results, and additional experiments are in the supplementary material.

\section{General problem definition}
\label{sec:prel}

% In this section, we describe our setting and notation and give a high-level description of our problem.
% Since our goal is to sparsify a network
% we call our general problem {\setnetworkprb}. 

Our input consists of an underlying undirected graph $G=(V, E)$ 
having $|V|=n$ vertices and $|E|=m$ edges.
% The graph $G$ may represent any linked dataset, such as a social network.
As a special case, the underlying network $G$ can be {\em empty}, 
i.e., there is no underlying network at all. 
We treat this case equivalently to the case in which the underlying network 
is the complete graph (clique).

Additionally, 
we consider as input
a collection of~$\ell$ sets 
$\communities = \{\community_1,\ldots ,\community_\ell\}$ over $V$, 
i.e., $\community_i\subseteq V$.
% It is convenient to think of the sets $\community_i$ as communities in the network.
% For example, 
% each set $\community_i$ may represent a  community of users discussing the same topic (topical community), 
% a community of people who work in the same office, 
% or a community of people who share a cause. 
We think of the sets {\communities} and we refer to them as network {\em communities}.
We assume that the sets in $\communities$ may be overlapping.
% as real-world communities are overlapping.

Our objective is to find a \emph{sparsifier} of the network $G$
that maintains certain {\em connectedness} properties
with respect to the given communities $\communities$.
A sparsifier of $G$
is a subgraph $G'=(V',E')$, 
where the number of edges $|E'|$ is significantly smaller than $|E|$.
The vertices $V'$ spanned by $G'$ 
are the vertices that appear in at least one community $\community_i$, i.e., 
$V'=\cup_{i=1}^\ell \community_i$.
Without loss of generality
we assume that $\cup_{i=1}^\ell \community_i = V$, so $V'=V$. 

\spara{Connectedness properties:}
To formally define the sparsification problem, 
we need to specify what it means for the sparse network
to satisfy a connectedness property with respect to the set of communities \communities. 

We provide the following formalization: 
given a graph $G=(V, E)$ and $S\subseteq V$, 
we use $E(S)$ to denote the edges of $E$ that have both endpoints in~$S$, 
and $G(S)=(S, E(S))$ is the subgraph of $G$ \emph{induced} by~$S$.
We are interested in whether a graph $G=(V,E)$ 
satisfies a certain property $\Property$ for a given set of communities
$\communities = \{\community_1,\ldots ,\community_\ell\}$ where $\community_i\subseteq V$.
We say that $G$ satisfies property $\Property$ with respect to a community $\community_i$ 
if the induced subgraph $G(\community_i)$ satisfies property $\Property$. 
We write $\indicator_\Property(G, \community_i)=1$ 
to denote the fact that $G(\community_i)$ satisfies property $\Property$, 
and $\indicator_\Property(G, \community_i)=0$ otherwise.

We consider three graph properties: 
($i$) \emph{connectivity}, denoted by $\connectivity$; 
($ii$) \emph{density}, denoted by $\density$; and 
($iii$) \emph{star containment}, denoted by $\stars$.
The corresponding indicator functions are denoted by 
$\indicator_\connectivity$, 
$\indicator_{\density\ge\alpha_i}$, 
and 
$\indicator_{\stars}$.

The connectivity property requires that each set $\community_i$
induces a connected subgraph.
The density property requires that each set $\community_i$ 
induces a subgraph of density at least~$\alpha_i$.
The density property is motivated by the common perception that
communities are usually densely connected.
The star-containment property requires that each set $\community_i$ induces a graph that contains a star.
The intuition is that star-shaped communities have small diameter and also have a community ``leader,'' which corresponds to the center of the graph.
%We focus on the star property as it is interesting in its own right and yet algorithmically challenging.

\spara{Problem definition:}
We can now define the general problem that we study in this paper.

\begin{problem}[{\setnetworkprb}]
\label{problem:general}
Consider a network $G=(V, E)$, 
% where $w$ represent edge weights, 
and let $\Property$ be a graph property.
Given a set of $\ell$ communities
$\communities = \{\community_1,\ldots, \community_\ell\}$, 
we want to find a {\em sparse network} $G' = (V, E')$ so that
{\em ($i$)} $E' \subseteq E$; 
{\em ($ii$)} $G'$ satisfies property $\Property$ for all communities $\community_i\in\communities$; 
%$\indicator_\Property(G', \community_i) = 1$;
and 
{\em ($iii$)} the total number of edges (or total edge weight, if defined) on the sparse network $|E'|$
% $\sum_{e\in E'} w(e)$ 
is minimized. 
\end{problem}
% In other words, {\setnetworkprb} asks to construct a network $G$, 
% which has subgraphs satisfying certain properties,
% and it has the minimum possible number of edges.  

% The last requirement, 
% i.e., minimizing the total number of edges in $G'$, 
% models the requirement that $G'$ should be sparse and therefore as simple as possible.

One question is whether a feasible solution for problem {\setnetworkprb} exists.
This can be easily checked by testing if property $\Property$
is satisfied for each $\community_i$ in the original network $G$.
If this is true, then a feasible solution exists --- the original network $G$ is such a solution.
Furthermore, if property $\Property$ is not satisfied for a community $\community_i$ in the original network, 
then this community can be dropped, 
and a feasible solution exists for all the communities for which the property is satisfied in the original network.

One should also note that the problem complexity and the
algorithms for solving Problem~{\ref{problem:general}} 
depend on the property $\Property$. 
% employed in the problem definition.
This is illustrated in the next paragraph, as well as in the next two sections.

\spara{Connectivity.}
Angluin {\etal}~\cite{angluin13connectivity} study the {\setnetworkprb} problem for the connectivity property.
%i.e., they find a graph $G'$ so that the induced subgraph for each $\community_i$ is connected. 
They show that it is an \NP-hard problem 
and provide an algorithm with logarithmic approximation guarantee.

% \smallskip
% Next we study the {\setnetworkprb} problem for the density and star properties.
% We discuss the two problems in detail in the next sections. 
% We show that both problems are \NP-hard 
% and we provide approximation algorithms for solving them.

%\input{problem}
\section{Sparsification with density constraints}
\label{sec:density}

We assume that each community $\community_i\in \communities$ 
is associated with a \emph{density requirement}
$\alpha_i$, where $0\leq \alpha_i\leq 1$.
This is the target density for community $\community_i$ in the sparse network. 
As a special case all communities may have the same target density,
i.e., $\alpha_i = \alpha$.
We say that a network $G'=(V,E')$ 
satisfies the density property
with respect to a community $\community_i$ and density threshold $\alpha_i$, 
if $|E'(\community_i)|\geq \alpha_i {|\community_i|\choose 2}$, 
that is, the density of the subgraph induced by $\community_i$ in $G'$ is at least $\alpha_i$. 
We denote this by 
$\indicator_{\density\ge\alpha_i}(G',\community_i)=1$;
otherwise we set 
$\indicator_{\density\ge\alpha_i}(G',\community_i)=0$.

The {\densprb} problem is defined as the special case of Problem~{\ref{problem:general}}, 
where $\Property$ is the density property. 
Before presenting our algorithm for the {\densprb} problem, 
we first establish its complexity. 
% The proof of the following Proposition is provided in the Appendix.

%\begin{restatable}{prop}{statdensnphard}
\begin{proposition}
\label{proposition:density-nphardness}
The \densprb\ problem is \NP-hard.
\end{proposition}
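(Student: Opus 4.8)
The plan is to reduce from the \hitprb\ problem (equivalently, \textsc{Set-Cover}), whose decision version is \NP-complete. Given a ground set $U = \{u_1,\ldots,u_n\}$ and a family of sets $S_1,\ldots,S_m \subseteq U$ in which every element lies in at least one set, I would construct in polynomial time a \densprb\ instance whose minimum sparsifier has size equal to the minimum number of sets covering $U$. The decision version of \densprb\ (is there a feasible $E'$ with $|E'|\le k$?) then inherits \NP-hardness directly.

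The construction encodes each set as a single \emph{private} edge. For every set $S_j$ I introduce two fresh vertices $a_j,b_j$ and one edge $e_j = \{a_j,b_j\}$, so that the underlying graph $G$ is the matching $\{e_1,\ldots,e_m\}$ on $2m$ distinct vertices. For every element $u_i$ I create a community $\community_i = \bigcup_{j:\,u_i \in S_j}\{a_j,b_j\}$, collecting the endpoints of exactly those edges whose set contains $u_i$. Because the pairs $\{a_j,b_j\}$ are pairwise disjoint, an edge $e_k$ has both endpoints in $\community_i$ if and only if $u_i \in S_k$; hence the induced edge set is precisely $E(\community_i) = \{e_k : u_i \in S_k\}$. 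Finally I set the density threshold $\alpha_i = 1/\binom{|\community_i|}{2}$, so that the requirement $|E'(\community_i)| \ge \alpha_i\binom{|\community_i|}{2}$ collapses to the integral condition that $\community_i$ must contain at least one selected edge.

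With this encoding, a sparsifier $E' = \{e_j : j \in J\}$ satisfies community $\community_i$ exactly when some index $j\in J$ has $u_i\in S_j$; thus $E'$ is feasible if and only if $\{S_j : j\in J\}$ covers $U$, and $|E'| = |J|$. Minimizing $|E'|$ therefore coincides with minimum set cover. Feasibility is preserved in both directions: every element lies in some set, so each $\community_i$ already contains its edges and the whole graph $G$ is a feasible solution of the \densprb\ instance. Since the map from covers to sparsifiers is size-preserving, an optimum on one side yields an optimum on the other, completing the reduction.

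I expect the only delicate point to be verifying that no \emph{unintended} internal edges arise inside a community, since a spurious edge could satisfy a community ``for free'' and break the correspondence. This is exactly what the disjoint-endpoint (matching) gadget rules out, and it is what lets the thresholds $\alpha_i$ be chosen so cleanly. If one additionally wants the hardness to persist for the uniform-density variant $\alpha_i = \alpha$, I would pad every $\community_i$ with private isolated dummy vertices up to a common size $N$; the dummies carry no incident edges and hence create no internal edges, so every requirement becomes ``at least one edge out of $\binom{N}{2}$'' under the single threshold $\alpha = 1/\binom{N}{2}$.
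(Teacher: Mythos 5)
Your proof is correct, and it takes a genuinely different construction from the paper's. Both arguments reduce from the same covering problem (\hitprb{} and \textsc{Set-Cover} are equivalent), and both exploit the same trick of choosing $\alpha_i$ so small that the density constraint degenerates to ``the community must contain at least one selected edge.'' The gadgets, however, are different. The paper builds a \emph{complete} graph on one vertex per item plus a hub vertex $s$, turns each set $C_i$ into the community $\set{s} \cup \set{x_u \mid u \in C_i}$, and then needs an exchange argument --- any chosen edge $(x_u, x_v)$ can be replaced by $(x_u, s)$ without violating any constraint --- to normalize solutions into a star around $s$, whose leaves form the hitting set. You instead encode each set as a private matching edge and each element as a community collecting the endpoint pairs of the sets containing it; since the underlying graph is a matching, feasible sparsifiers are in size-preserving bijection with covers, so no normalization step is needed, which makes the correspondence cleaner and the proof shorter. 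What the paper's choice buys in exchange is a stronger conclusion: because its feasibility network is a clique, the hardness persists even when the underlying network is complete, which the paper notes explicitly and which matters for its \emph{network design} special case (no underlying network $\equiv$ clique); your matching-based instance is maximally sparse and does not yield that corollary. Your padding trick for the uniform-threshold variant is a nice extra, though the paper's construction is natively uniform-$\alpha$.
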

%\end{restatable}

We now present {\densitygreedy}, 
a greedy algorithm for the {\densprb} problem.
Given an instance of {\densprb}, 
i.e., a network $G=(V,E)$, 
a set of $\ell$ communities $\community_i$, 
and corresponding densities $\alpha_i$, 
the algorithm provides an $\bigO(\log \ell)$-approximation guarantee.

To analyze {\densitygreedy}, 
we consider 
a \emph{potential function}~\densitypotential,
defined over subsets of edges of $E$.
For an edge set $H\subseteq E$, 
a community $\community_i$, 
and density constraint $\alpha_i$, 
the potential~\densitypotential\ is defined as 
\begin{equation}
\label{eq:densitypotential1}
\densitypotential(H, \community_i)=\min\set{0,\,\abs{H(\community_i)} - \left\lceil\alpha_i {|\community_i|\choose 2}\right\rceil},
\end{equation}
where, 
$\lceil\cdot\rceil$ denotes the ceiling function.
% and as before, 
% $\abs{H(\community_i)}$ is the number of edges of $H$ that have both endpoints in $\community_i$.
Note that  
\[
\begin{array}{cccc}
\densitypotential(H,\community_i)<0 & \text{if} & \indicator_{\density\ge\alpha_i}(G(\community_i,H),\,\community_i)=0, &  \text{and} \\
\densitypotential(H,\community_i)=0 & \text{if} & \indicator_{\density\ge\alpha_i}(G(\community_i,H),\,\community_i)=1. &
\end{array}
\]
In other words,  
$\densitypotential$ is negative if the edges $H$ do not satisfy 
the density constraint on $\community_i$, 
and becomes zero as soon as the density constraint is satisfied. 

We also define the \emph{total potential} of a set of edges $H$ 
with respect to the input communities $\communities$ as
\begin{equation}\label{eq:densitypotential}
\densitypotential(H)=\sum_{\community_i\in\communities}\densitypotential(H,\community_i).
\end{equation}

The choices of {\densitygreedy} are guided by the potential function \densitypotential.
The algorithm starts with $E'=\emptyset$
and at each iteration it selects an edge $e\in E \setminus E'$ 
that maximizes the potential difference 
% \[
% \frac{\densitypotential(E'\cup\{e\}) - \densitypotential(E')}{d(e)}. 
% \]
\[
\densitypotential(E'\cup\{e\}) - \densitypotential(E').
\]
The algorithm terminates when it reaches to a set $E$ 
with $\densitypotential(E)=0$, 
indicating that the density constraint is satisfied for all input sets $\community_i\in\communities$.
It can be shown that {\densitygreedy} provides an approximation guarantee.

\begin{proposition}
\label{prop:approximation}
\emph{{\densitygreedy}} is an $\bigO(\log \ell)$-appro\-xi\-ma\-tion 
algorithm for the {\densprb} problem.
\end{proposition}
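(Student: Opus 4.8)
The plan is to recognize the {\densprb} problem as an instance of \emph{submodular set cover} and to drive the analysis of {\densitygreedy} through the classical greedy bound for that problem. First I would shift the potential to a nonnegative, monotone form. Setting $k_i = \lceil \alpha_i \binom{|\community_i|}{2} \rceil$ and defining $f(H) = \densitypotential(H) + \sum_i k_i$, a one-line calculation gives the clean expression
\[
f(H) = \sum_{\community_i \in \communities} \min\set{k_i,\,\abs{H(\community_i)}},
\]
so that $f(\emptyset)=0$, $f$ is integer-valued, and the stopping condition $\densitypotential(E')=0$ of the algorithm is exactly $f(E') = f^{*} \Def \sum_i k_i$, the maximum value of $f$. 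Since maximizing $\densitypotential(E'\cup\set{e}) - \densitypotential(E')$ is identical to maximizing the marginal gain $f(E'\cup\set{e}) - f(E')$, the algorithm {\densitygreedy} is precisely the unit-cost greedy heuristic for covering $f$.

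Next I would establish that $f$ is monotone and submodular. Monotonicity is immediate because $\abs{H(\community_i)}$ is nondecreasing in $H$ and $x \mapsto \min\set{k_i,x}$ is nondecreasing. For submodularity, observe that each summand $\min\set{k_i,\abs{H(\community_i)}}$ is a concave, nondecreasing function of the modular quantity $\abs{H(\community_i)} = \sum_e \indicator[e \in H(\community_i)]$; a concave function composed with a modular function is submodular, and submodularity is preserved under summation. Equivalently, one checks directly that the marginal gain of adding an edge $e$ equals the number of communities $\community_i$ that contain both endpoints of $e$ and are still \emph{deficient} (that is, $\abs{E'(\community_i)} < k_i$), a quantity that can only shrink as $E'$ grows.

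With monotonicity and submodularity in hand, I would invoke Wolsey's theorem for submodular cover: the unit-cost greedy returns a solution of size at most $(1 + \ln d)\cdot \mathrm{OPT}$, where $d = \max_e\,[\,f(\set{e}) - f(\emptyset)\,]$ is the largest marginal gain of a single edge over the empty set. Here $f(\set{e}) = \sum_i \min\set{k_i,\,\indicator[e\in E(\community_i)]}$ is exactly the number of communities containing $e$, which is at most $\ell$. Hence $d \le \ell$, and the approximation factor is $1 + \ln \ell = \bigO(\log\ell)$, as claimed.

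The step I expect to require the most care is pinning down the parameter that controls the ratio. A naive ``residual-shrinking'' argument only bounds the greedy by $\bigO(\log f^{*}) = \bigO(\log \sum_i k_i)$, which depends on the community sizes and total edge demand rather than on the number of communities. Obtaining the sharper $\bigO(\log\ell)$ bound hinges on the refined Wolsey estimate in terms of the maximum single-edge marginal $\max_e f(\set{e})$, together with the observation that this marginal never exceeds $\ell$. If a self-contained argument is preferred over citing Wolsey, I would reprove the refined estimate by a charging/dual-fitting argument: charge each greedy edge a price of $1/\Delta$ to each of the $\Delta$ units of potential it covers, and then bound the total price accumulated against any fixed optimal edge by the harmonic sum $1 + \ln\ell$.
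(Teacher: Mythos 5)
Your proposal matches the paper's own proof in its essentials: the paper likewise establishes that the potential $\densitypotential$ is monotone and submodular (Proposition~\ref{prop:submodularity}) and then invokes Wolsey's classic result on submodular covering to obtain the $\bigO(\log \ell)$ guarantee, and your shifted function $f(H)=\densitypotential(H)+\sum_i k_i$ is just the standard nonnegative normalization of that same potential. Your additional care in pinning down Wolsey's parameter as the maximum singleton marginal $\max_e f(\set{e}) \le \ell$ — rather than the weaker $\bigO(\log \sum_i k_i)$ bound — is a detail the paper leaves implicit, and it is exactly the right way to justify the stated factor.
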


We obtain
Proposition~\ref{prop:approximation} by using the 
classic result of Wolsey~\cite{Wolsey:1982jx} 
on maximizing motonote and submodular functions.
The key is to show that the potential function 
{\densitypotential} 
is monotone and submodular.
% The proof is given in the Appendix.

%\begin{restatable}{prop}{statsubmodular}
\begin{proposition}
\label{prop:submodularity}
The potential function \densitypotential\ is monotone and submodular.
% with respect to the set of edges~$H$. 
\end{proposition}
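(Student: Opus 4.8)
The plan is to reduce everything to a single community and then exploit the fact that $\densitypotential(\cdot,\community_i)$ is a concave, non-decreasing transformation of a modular edge-counting function. Since $\densitypotential(H)=\sum_{\community_i\in\communities}\densitypotential(H,\community_i)$ by \eqref{eq:densitypotential}, and since both monotonicity and submodularity are preserved under finite sums, it suffices to prove that each per-community term $\densitypotential(\cdot,\community_i)$ is monotone and submodular. Fix a community $\community_i$ and write $t_i=\lceil \alpha_i {\abs{\community_i}\choose 2}\rceil$ for its (constant) edge quota, so that $\densitypotential(H,\community_i)=\min\set{0,\,\abs{H(\community_i)}-t_i}$. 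The first observation I would record is that the inner quantity $\abs{H(\community_i)}$, the number of edges of $H$ with both endpoints in $\community_i$, is a \emph{modular}, non-decreasing set function on $E$: adding an edge $e$ increases it by exactly $1$ when $e\in E(\community_i)$ and leaves it unchanged otherwise.

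Monotonicity is then immediate: $\abs{H(\community_i)}$ is non-decreasing in $H$, and the scalar map $x\mapsto\min\set{0,x-t_i}$ is non-decreasing, so their composition is non-decreasing; summing over $\community_i$ preserves this.

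For submodularity, the key step is to compute the marginal gain of adding a single edge $e$ to an edge set $H$ and to show it is non-increasing as $H$ grows. If $e\notin E(\community_i)$ the gain is $0$. If $e\in E(\community_i)$, writing $k=\abs{H(\community_i)}$, the gain equals $\min\set{0,k+1-t_i}-\min\set{0,k-t_i}$, which is $1$ when $k<t_i$ and $0$ when $k\ge t_i$. In words, the marginal value of an edge is $1$ precisely while $\community_i$'s quota is unmet and drops to $0$ once it is filled. Because $k=\abs{H(\community_i)}$ is itself non-decreasing in $H$, this per-edge gain is a non-increasing function of $H$; hence for any $A\subseteq B$ and $e\notin B$ we obtain $\densitypotential(A\cup\set{e},\community_i)-\densitypotential(A,\community_i)\ge\densitypotential(B\cup\set{e},\community_i)-\densitypotential(B,\community_i)$, which is exactly the diminishing-returns characterization of submodularity. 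Summing over all communities yields submodularity of $\densitypotential$.

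I do not expect a genuine obstacle here: the content is entirely in recognizing the concave-of-modular structure, which is a standard source of submodular functions, and the ceiling in the definition of $t_i$ only fixes an integer constant that plays no role in the argument. The one point to state carefully is that both properties are closed under the (finite, non-negative) sum in \eqref{eq:densitypotential}, so that reducing to a single community is legitimate.
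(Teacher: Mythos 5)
Your proposal is correct and follows essentially the same route as the paper's proof: decompose $\densitypotential$ into per-community terms, observe that each per-edge marginal gain is either $0$ or $1$, and note that the gain vanishes (and stays vanished for supersets) once the community's edge quota is met. Your ``concave of a modular function'' packaging and the explicit threshold formula for the marginal are just a cleaner restatement of the paper's case analysis, so there is nothing to correct.
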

%\end{restatable}

\spara{Running-time analysis.} 
Let $L = \sum_{i=1}^\ell|\community_i|$ % be the total size of all input sets, 
and $m = |E|$. %  be the number of edges in the feasibility network $G_0$.
% We create a hashtable $H$ that stores membership information in the sets $\community_i$.  
% Creating $H$ requires $\bigO(L)$ space and time. 
Consider an $m\times L$ table $T$
so that $T[e,i]=1$ if $e\in E(\community_i)$ and $T[e,i]=0$ otherwise.
% Building the table $T$ can be done at constant time per non-zero entry, using the hashtable $H$.
% Once $T$ is built, 
It is easy to see that {\densitygreedy} can be implemented 
with a constant number of passes for each non-zero entry of $T$, 
% Indeed, the contribution of an edge $e\in E$ to the potential function \densitypotential\
% is equal to the number of non-zero entries of the $e$-th row in the current iteration. 
% When the potential $\densitypotential(E,\community_i)$ reaches 0 for a set $\community_i$
% we need to set to zero all the entries of the $i$-th column. 
giving a running time of $\bigO(L+m\ell)$. 
If we use a sparse implementation for $T$, 
the overall running time becomes $\bigO(L+|T|)$, 
where $|T|$ is the number of non-zero entries of~$T$.

\spara{Adding connectivity constraints.}
Note that solutions to the {\densprb} problem
may be sparse networks in which communities $\community_i$ are dense but {\em disconnected}.
For certain applications we may want do introduce an additional 
connectivity constraint, so that all subgraphs $G(\community_i,E(\community_i))$ are connected.

Combining the two constraints of density and connectivity 
can be handled by a simple adaptation of the greedy algorithm.
In particular we can use a new potential that is the sum of 
the density potential in Equation~\eqref{eq:densitypotential1}
and a potential for connected components. 
This new potential is still monotone and submodular, thus, 
a modified greedy will return a solution that satisfies both density and connectivity constraints
and provides an $\bigO(\log \ell)$-appro\-xi\-ma\-tion guarantee. 

\section{Sparsification with star constraints}
\label{sec:star}

In the second instantiation of the {\setnetworkprb} problem, 
the goal is to find a sparse network $G'$
so that each input community $\community_i\in\communities$ {\em contains a star}, meaning that for every community $\community_i$ subgraph $G(\community_i,E(\community_i))$ has a star spanning subgraph.
The motivation is that a star
has small diameter, as well as
a central vertex that can act as a community leader.
Thus, star-shaped groups have low communication cost
and good coordination properties.

We define the {\starprb} problem as the special case of Problem~{\ref{problem:general}} 
by taking $\Property$ to be the star-containment property. 
We can again show that {\starprb} is a computationally hard problem. 
% The proof of the following proposition is given in the Appendix. 

%\begin{restatable}{prop}{statstarnphard}
\begin{proposition}
\label{prop:starhardness}
The \starprb\ problem is \NP-hard.
% even when the funderlying network $G$ is fully connected.
\end{proposition}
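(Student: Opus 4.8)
The plan is to prove \NP-hardness of \starprb\ by a reduction from \textsc{Vertex Cover}. Given a graph $H=(W,F)$ with no isolated vertices and a target $k$, I build a \starprb\ instance whose optimum equals $|F|+\tau(H)$, where $\tau(H)$ is the size of a minimum vertex cover of $H$; consequently $H$ has a cover of size at most $k$ if and only if the sparsifier can be built with at most $|F|+k$ edges.

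\textbf{Construction.} I add a single hub vertex $h$ to $W$ and let the underlying graph $G$ contain all original edges $F$ together with a hub edge $\{h,w\}$ for every $w\in W$. For each $f=\{u,v\}\in F$ I create two communities: a \emph{forcing} community $D_f=\{u,v\}$ and a \emph{triangle} community $C_f=\{u,v,h\}$. The induced subgraph $G(D_f)$ is a single edge, so every feasible sparsifier must contain $\{u,v\}$; hence all $|F|$ original edges are forced and contribute a fixed cost of $|F|$. The induced subgraph $G(C_f)$ is the triangle on $\{u,v,h\}$, whose spanning stars are centered at $u$, at $v$, or at $h$.

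\textbf{Correspondence and counting.} Because $\{u,v\}$ is already forced, $C_f$ admits a spanning star exactly when $\{u,h\}\in E'$ or $\{v,h\}\in E'$: centering at $u$ requires only the extra edge $\{u,h\}$, centering at $v$ only $\{v,h\}$, while centering at $h$ requires both and is therefore never advantageous. Writing $A=\{w\in W:\{h,w\}\in E'\}$, the family $\{C_f\}$ is satisfied precisely when every $f=\{u,v\}$ has an endpoint in $A$, i.e. when $A$ is a vertex cover of $H$, and the number of hub edges used is $|A|$. Thus every feasible sparsifier has cost $|F|+|A|$ for some vertex cover $A$, and conversely any vertex cover $A$ is realized by centering each $C_f$ at an endpoint of $f$ lying in $A$. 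Minimizing over $A$ gives $\mathrm{OPT}=|F|+\tau(H)$, completing the equivalence.

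\textbf{Main obstacle.} The only delicate point is the cost bookkeeping that makes this correspondence exact. I must argue that an optimal solution never profits from hub-centered stars (guaranteed by the forcing communities $D_f$, which make each base edge free of charge, so a hub center merely wastes a second hub edge) and never includes edges outside $F\cup\{\{h,w\}\}$ (any such edge lies in no induced community subgraph and can be discarded without losing feasibility). Once these two facts are established, $\mathrm{OPT}=|F|+\tau(H)$ is immediate and the construction is plainly polynomial. I would also observe that, since every community induces only a single edge or a triangle, the very same instance proves hardness even when $G$ is the complete graph, thereby covering the network-design special case discussed earlier.
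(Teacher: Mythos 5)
Your reduction is correct, but it takes a genuinely different route from the paper's. The paper proves Proposition~\ref{prop:starhardness} by a reduction from \matchprb: each triple spawns a four-vertex gadget $p_i,u_i,v_i,w_i$, the underlying graph is fully connected, each element of $X\cup Y\cup Z$ yields a community obtained by unioning its gadget vertices over all triples containing it, and the argument then rests on a careful edge count (centers may be assumed to lie among the $p_i$, and an explicit correction term accounts for star edges shared between two or three communities). You instead reduce from \textsc{Vertex Cover}: a single hub $h$, size-two \emph{forcing} communities $\{u,v\}$ that pin every base edge into any feasible sparsifier, and triangle communities $\{u,v,h\}$ whose satisfaction becomes equivalent to choosing at least one hub edge per base edge, so every feasible solution costs exactly $|F|+|A|$ with $A$ a vertex cover. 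This makes the bookkeeping exact and essentially trivial---no double-counting corrections are needed---and, as you note, it also covers the network-design case where $G$ is complete (as does the paper's construction). What the paper's heavier construction buys is structural strength of the hard instances: all of its communities have size at least $3$, and any common intersection involves at most $3$ communities, so hardness holds even in the bounded-overlap regime where the approximation guarantee of Proposition~\ref{prop:undirbound} (whose factor depends on the maximum overlap $k$) is most meaningful. Your instances, by contrast, have a hub contained in all $|F|$ triangle communities, so $k$ is unbounded there, and they rely on degenerate two-vertex communities; that is perfectly legitimate for plain \NP-hardness of \starprb, but it establishes a slightly weaker structural statement.
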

%\end{restatable}

Unfortunately, {\starprb} is not amenable to the same approach we used for {\densprb}, 
hence, we use a completely different set of algorithmic techniques.
% Thus, hereinafter we restrict our discussion on unweighted networks.
Our algorithm, called {\matching} (for {\em ``directed stars to stars''}),
is based on solving a \emph{directed version} of {\starprb}, 
which is formally defined as follows. 

\begin{problem}[\distarprb]
\label{problem:distar}
Consider a directed network $N=(V,A)$, and a set of $\ell$ communities
$\communities = \{\community_1,\ldots, \community_\ell\}$. 
We want to find a sparse directed network $N' = (V, A')$ 
with $A'\subseteq A$, 
such that, 
the number of edges $\abs{A'}$ is minimized
and for each community 
$\community_i\in\communities$ there is a central vertex $c_i \in \community_i$ 
with $(c_i \rightarrow x)\in A'$ for all $x \in \community_i\setminus\{c_i\}$.
\end{problem}

In Problem~\ref{problem:distar}, 
the original network $N$ and the sparsifier $N'$ are both directed. 
Note, however, that {\distarprb} can be also defined 
with an undirected network $G$ as input:
just create a directed version of $G$, 
by considering each edge in both directions. 
Thus we can consider that {\starprb} and {\distarprb} take the same input. 
In this case, it is easy to verify the following observation.

%\begin{restatable}{obs}{statrelationship}
\begin{observation}
\label{observation:relationship}
If $G^\ast = (V,E^\ast)$ is the optimal solution for {\starprb}, 
and $N^\ast=(V,D^\ast)$ is the optimal solution for {\distarprb},  
for the same input, 
then \[\abs{E^{\ast}} \leq \abs{D^\ast}\leq 2 \abs{E^{\ast}}.\]
\end{observation}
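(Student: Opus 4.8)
The plan is to prove the two inequalities of the sandwich separately, in each case by converting an optimal solution of one problem into a feasible solution of the other. The only fact I will use is the two-way correspondence between the two ground sets: since \distarprb\ is fed the directed graph obtained from the undirected input by replacing each edge $\{u,v\}\in E$ with the arcs $(u\to v)$ and $(v\to u)$, any arc $(u\to v)\in A$ corresponds to a genuine undirected edge $\{u,v\}\in E$, and conversely.

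For the lower bound $\abs{E^\ast}\le\abs{D^\ast}$, I would start from the optimal directed solution $N^\ast=(V,D^\ast)$ and \emph{forget the orientations}: set $E'=\set{\{u,v\}\mid (u\to v)\in D^\ast}$. Then $E'\subseteq E$ and $\abs{E'}\le\abs{D^\ast}$, with strict inequality exactly when some edge is used in both directions. I then check that $E'$ is feasible for \starprb: for each community $\community_i$ the directed solution provides a center $c_i$ with $(c_i\to x)\in D^\ast$ for every $x\in\community_i\setminus\set{c_i}$, hence $\{c_i,x\}\in E'$ for all such $x$, so $c_i$ is the center of a spanning star of $G(\community_i,E'(\community_i))$. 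Optimality of $E^\ast$ then gives $\abs{E^\ast}\le\abs{E'}\le\abs{D^\ast}$.

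For the upper bound $\abs{D^\ast}\le 2\abs{E^\ast}$, I would run the conversion in the opposite direction: from the optimal undirected solution $G^\ast=(V,E^\ast)$ build $A'=\set{(u\to v),(v\to u)\mid \{u,v\}\in E^\ast}$, so that $A'\subseteq A$ and $\abs{A'}=2\abs{E^\ast}$. Feasibility for \distarprb\ is immediate: the star center $c_i$ of $\community_i$ in $G^\ast$ satisfies $\{c_i,x\}\in E^\ast$ for every $x\in\community_i\setminus\set{c_i}$, hence the out-arc $(c_i\to x)\in A'$, which is precisely the requirement of \distarprb. Optimality of $D^\ast$ yields $\abs{D^\ast}\le\abs{A'}=2\abs{E^\ast}$.

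There is no serious obstacle here; the two points that need a moment's care are (i) verifying that the converted edge/arc sets lie in the allowed ground set, which is guaranteed by the correspondence between $E$ and $A$, and (ii) noticing that in the lower-bound direction forgetting orientations can only decrease cardinality, so the inequality holds even in the worst case where no edge is doubled. Combining the two bounds gives $\abs{E^\ast}\le\abs{D^\ast}\le 2\abs{E^\ast}$.
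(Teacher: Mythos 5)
Your proof is correct and follows exactly the paper's approach: the lower bound by forgetting orientations (and merging duplicate arcs) in the optimal directed solution, and the upper bound by duplicating each undirected edge of the optimal star solution into both directions. The paper's proof is just a terser statement of the same two translations; your added feasibility checks and the explicit use of optimality fill in details the paper leaves implicit.
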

%\end{restatable}

As with {\starprb}, 
the {\distarprb} problem is \NP-hard.\footnote{The 
proof of Proposition~\ref{prop:starhardness} can be modified slightly to show 
\NP-hardness for {\distarprb}.} 
Our approach is to solve {\distarprb} 
and use the directed sparsifier $N'=(V,D')$ to 
obtain a solution for {\starprb}. 
Observation~\ref{observation:relationship}
guarantees that the solution we obtain for {\starprb} in this way
is not far from the optimal.
In the next paragraph, we describe 
how to obtain an approximation algorithm for the {\distarprb} problem.

\spara{Solving {\distarprb}.}
First, we observe that {\distarprb} can be viewed as a {\hyperprb} problem, 
which is defined as follows:  
we are given a set of elements $X$, 
a collection of \emph{hyperedges} $\mathcal{H} = \{H_1,\dots H_t\}$,
$H_i \subseteq X$, 
and a score function $\funcdef{c}{\mathcal{H}}{\Reals}$. 
We seek to find a set of disjoint hyperedges 
$\mathcal{I}\subseteq\mathcal{H}$ 
maximizing the total score 
$\sum_{H_i\in \mathcal{I}} c(H_i)$.
% \footnote{Ties are resolved by taking the largest set.}

The mapping from {\distarprb} to {\hyperprb} is done as follows:
We set $X$ to $\communities$.  
Given a subset $H \subseteq \communities$, 
let us define $\intersection(H)$ and $\union(H)$ 
to be the intersection and union of members in $H$,
respectively. Let us construct a set of hyperedges as
\[
\mathcal{H} = \left\{H \mid H \subseteq \communities \text{ and } \intersection(H) \ne \emptyset\right\},
\]
and assign scores
\[
c(H) = 1 - \abs{H} + \sum_{v \in \union(H)} \abs{\set{i \mid v \in \community_i \in H}} - 1.
\]
Note that
if represented na\"{i}vely, 
the resulting hypergraph can be of exponential size. 
However, this problem can be avoided easily by 
an implicit representation of the hypergraph. 
We can now show that the optimal solution % $\mathcal{I}$ 
to the transformed instance of {\hyperprb} 
can be used to obtain an optimal solution to {\distarprb}.
% This is described in the following proposition.

%\begin{restatable}{prop}{statoptimal}
\begin{proposition}
\label{prop:optimal}
Let $\mathcal{I}$ be the optimal solution
for the {\hyperprb} problem instance. 
Let $D_\mathcal{I}$ be the union of directed stars, 
each star having a center in $\intersection(H)$
and directed edges towards vertices in $\union(H)$ for every $H \in \mathcal{I}$. 
If $D^\ast$ is the optimal solution to the 
{\distarprb} problem, then 
% $D^\ast=D_\mathcal{I}$  and 
\[
	 \abs{D^\ast} = \abs{D_\mathcal{I}}=\sum_{\community_i\in\communities}(|\community_i| - 1)  - \sum_{H \in \mathcal{I}} c(H).
\]
\end{proposition}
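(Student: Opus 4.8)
The plan is to read the score $c(H)$ as the number of edges \emph{saved} when the communities in $H$ are served by a single shared star instead of by $\abs{H}$ separate ones, and then to set up a correspondence between feasible {\distarprb} solutions and partitions of $\communities$ into valid hyperedges. I would first record the algebraic identity underlying $c$. Since each incidence is counted once, $\sum_{v \in \union(H)} \abs{\set{i \mid v \in \community_i \in H}} = \sum_{\community_i \in H}\abs{\community_i}$ and $\sum_{v \in \union(H)} 1 = \abs{\union(H)}$, so the score rewrites as
\[
c(H) = \sum_{\community_i \in H}(\abs{\community_i} - 1) - \bigl(\abs{\union(H)} - 1\bigr).
\]
Here $\sum_{\community_i \in H}(\abs{\community_i}-1)$ is the cost of one private star per community of $H$, while $\abs{\union(H)}-1$ is the cost of a single star centered at some common vertex $c\in\intersection(H)$ that reaches all of $\union(H)$; their difference is exactly the savings. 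I set the baseline $B = \sum_{\community_i \in \communities}(\abs{\community_i} - 1)$.

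Next I would establish the two-way correspondence. An optimal {\distarprb} solution assigns each community a center and contains precisely the required star edges (any extra edge can be removed). Grouping communities by their assigned center partitions $\communities$ into groups $\mathcal{G}_1,\dots,\mathcal{G}_k$, each sharing a common center, so $\intersection(\mathcal{G}_j)\neq\emptyset$ and each $\mathcal{G}_j$ is a valid hyperedge. Because distinct groups use distinct center vertices, the edge sets of their stars are disjoint, whence $\abs{D^\ast} = \sum_j(\abs{\union(\mathcal{G}_j)} - 1) = B - \sum_j c(\mathcal{G}_j)$ by the identity. Conversely, any partition of $\communities$ into valid hyperedges gives a feasible solution by picking one center per hyperedge from its nonempty intersection, with the same edge count. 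Minimizing edges is therefore equivalent to maximizing $\sum_H c(H)$ over partitions, so $\abs{D^\ast} = B - \max_{\text{partition}}\sum_H c(H)$.

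I would then connect partitions to the {\hyperprb} optimum. Every partition is a disjoint collection, and conversely any disjoint collection extends to a partition by adjoining singletons $\set{\community_i}$, each satisfying $c(\set{\community_i}) = 0$; hence the maximum of $\sum_H c(H)$ over partitions equals its maximum over disjoint collections, i.e.\ the value attained by the {\hyperprb} optimum $\mathcal{I}$. In particular I may take $\mathcal{I}$ itself to be a partition, so $D_\mathcal{I}$ covers every community and is feasible, and $\abs{D_\mathcal{I}} = \sum_{H\in\mathcal{I}}(\abs{\union(H)}-1) = B - \sum_{H\in\mathcal{I}} c(H)$. Combining the two computations yields $\abs{D^\ast} = \abs{D_\mathcal{I}} = B - \sum_{H\in\mathcal{I}} c(H)$, which is the claimed equality.

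The main obstacle I anticipate is the edge bookkeeping when the stars are unioned to form $D_\mathcal{I}$: the identity $\abs{D_\mathcal{I}} = \sum_{H\in\mathcal{I}}(\abs{\union(H)}-1)$ holds only if no edge is double counted, i.e.\ the chosen centers are pairwise distinct. This is where I would argue that an optimal $\mathcal{I}$ can be assumed to have distinct centers: if two selected hyperedges $H,H'$ shared a center vertex $v$, then $v\in\intersection(H\cup H')$, and a short inclusion--exclusion gives $c(H\cup H') - c(H) - c(H') = \abs{\union(H)\cap\union(H')} - 1 \ge 0$, so merging them preserves optimality; iterating removes all coincidences. Handling this merging step, together with the singleton-padding argument for the partition-versus-disjoint-collection gap, is the only place where more than routine algebra is required.
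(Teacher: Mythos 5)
Your proof is correct, and its skeleton is the same as the paper's: your rewritten score $c(H) = \sum_{\community_i \in H}(\abs{\community_i}-1) - \bigl(\abs{\union(H)}-1\bigr)$ is precisely the paper's Lemma~\ref{lem:star} (which the paper states without proof), and your grouping of communities by their assigned star centers is exactly how the paper extracts a feasible \hyperprb{} solution $\mathcal{P}$ from $D^\ast$. The one substantive difference is how the edge double-counting is treated. The paper never needs distinct centers: writing $C = \sum_{\community_i\in\communities}(\abs{\community_i}-1)$, it proves only the three inequalities $\abs{D^\ast} \le \abs{D_\mathcal{I}}$ (feasibility of $D_\mathcal{I}$), $\abs{D_\mathcal{I}} \le C - \sum_{H\in\mathcal{I}} c(H)$ (a union bound, where any coincidence of edges only helps), and $C - \sum_{H\in\mathcal{I}} c(H) \le \abs{D^\ast}$ (your forward direction combined with optimality of $\mathcal{I}$), and closes the sandwich so that all three quantities collapse to equality. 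You instead insist on the exact identity $\abs{D_\mathcal{I}} = C - \sum_{H\in\mathcal{I}} c(H)$ up front, which forces your merging lemma; the inclusion--exclusion computation $c(H\cup H') - c(H) - c(H') = \abs{\union(H)\cap\union(H')} - 1 \ge 0$ is correct and the iteration terminates, so this route is sound, but it is extra machinery that the paper's one-sided bound makes unnecessary. On the other hand, your explicit padding of $\mathcal{I}$ with zero-score singletons is more careful than the paper: coverage of every community by some $H\in\mathcal{I}$ is genuinely needed (with a single two-vertex community the empty collection is also an optimal \hyperprb{} solution, yet gives the infeasible $D_\mathcal{I}=\emptyset$), and the paper asserts this coverage without the justification you supply.
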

%\end{restatable}

Consider a greedy algorithm, {\hypergreedy}, 
which constructs a solution to {\hyperprb}
by % starting with an empty set $\mathcal{J}$ and
iteratively adding hyperedges to~$\mathcal{J}$
so that it maximizes $\sum_{H_i\in \mathcal{J}} c(H_i)$,
while keeping $\mathcal{J}$ disjoint. 
As shown in the supplementary material, 
{\hypergreedy} is a $k$-factor approximation algorithm
for the {\hyperprb} problem.
The proof is based on the concept of $k$-extensible systems~\cite{mestre2006greedy}. 

\begin{proposition}\label{prop:kapprox}
Let $\mathcal{J}$ be the resulting set of hyperedges given by
the {\em\hypergreedy} algorithm, 
and let $\mathcal{I}$ be the optimal solution for {\hyperprb}.
Then,
\[
	\sum_{H \in \mathcal{I}} c(H) \leq k \sum_{H \in \mathcal{J}} c(H), \quad\text{where}\quad k = \max_{H \in \mathcal{H}} \abs{H}.
\]
\end{proposition}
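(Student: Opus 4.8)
The plan is to recognize {\hyperprb} as the problem of maximizing a \emph{modular} objective over an independence system, and then to invoke the greedy-approximation guarantee for $k$-extensible systems of Mestre~\cite{mestre2006greedy}. Concretely, I would take the ground set of the system to be the collection $\mathcal{H}$ of all hyperedges, and declare a subcollection $\mathcal{S}\subseteq\mathcal{H}$ to be \emph{independent} exactly when its members are pairwise disjoint as subsets of $X$. The objective $c(\mathcal{S})=\sum_{H\in\mathcal{S}}c(H)$ is then modular in $\mathcal{S}$, and {\hypergreedy}, which repeatedly adds the disjointness-preserving hyperedge of largest score, is precisely the greedy rule analyzed in that framework. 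Under this identification the claimed bound $\sum_{H\in\mathcal{I}}c(H)\le k\sum_{H\in\mathcal{J}}c(H)$ is exactly the $k$-approximation guarantee, so the entire task reduces to pinning down the extensibility parameter of the disjointness system.

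The heart of the argument is to verify that this system is $k$-extensible with $k=\max_{H\in\mathcal{H}}\abs{H}$. Recall that $k$-extensibility requires that for any independent subcollections $\mathcal{A}\subseteq\mathcal{B}$ and any hyperedge $H$ with $\mathcal{A}\cup\set{H}$ independent, there is a set $\mathcal{Y}\subseteq\mathcal{B}\setminus\mathcal{A}$ with $\abs{\mathcal{Y}}\le k$ such that $(\mathcal{B}\setminus\mathcal{Y})\cup\set{H}$ is again independent. I would simply take $\mathcal{Y}$ to be the collection of members of $\mathcal{B}$ that intersect $H$. Each such member lies in $\mathcal{B}\setminus\mathcal{A}$, since $H$ is disjoint from every member of $\mathcal{A}$ by the assumption that $\mathcal{A}\cup\set{H}$ is independent. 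Furthermore, because the members of $\mathcal{B}$ are pairwise disjoint, each element $v\in H$ belongs to at most one member of $\mathcal{B}$; hence at most $\abs{H}$ members of $\mathcal{B}$ can meet $H$, giving $\abs{\mathcal{Y}}\le\abs{H}\le k$. Removing $\mathcal{Y}$ leaves a subcollection whose members are disjoint from $H$ and from one another, so $(\mathcal{B}\setminus\mathcal{Y})\cup\set{H}$ is independent, establishing $k$-extensibility. The theorem of Mestre then yields the stated inequality directly.

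I expect the $k$-extensibility step itself to be short and clean; the main delicate points are bookkeeping rather than structural. First, I would confirm that {\hypergreedy}'s selection rule (maximizing the total score at each step, subject to disjointness) coincides with the decreasing-weight greedy of the $k$-extensible framework, which holds because $c$ is modular. Second, I would handle hyperedges of non-positive score: since deleting such a hyperedge never decreases the objective and never destroys disjointness, I may assume without loss of generality that both the optimum $\mathcal{I}$ and the greedy solution $\mathcal{J}$ use only hyperedges with $c(H)>0$, so that the modular-maximization guarantee applies verbatim. These two observations, together with the extensibility bound above, complete the argument.
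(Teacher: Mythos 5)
Your proposal is correct and takes essentially the same route as the paper: both cast {\hyperprb} as maximizing a modular weight over the disjointness independence system and invoke Mestre's greedy guarantee for $k$-extensible systems~\cite{mestre2006greedy}. The only difference is that you explicitly verify $k$-extensibility (via the exchange collection $\mathcal{Y}$ of members of $\mathcal{B}$ meeting $H$), a step the paper asserts by citation without proof.
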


Propositions~\ref{prop:optimal} and~\ref{prop:kapprox} imply the following.

\begin{corollary}
\label{cor:dirbound}
Let $D^\ast$ be the optimal solution of the {\distarprb} problem.
Let $\mathcal{J}$ be the greedy solution to the corresponding {\hyperprb} problem,
and let $D_\mathcal{J}$ be the corresponding edges 
(obtained as described in  Proposition~\ref{prop:optimal}). 
Then,
\[
	\abs{D_\mathcal{J}} \leq \frac{k - 1}{k} C + \frac{1}{k}\abs{D^\ast},
\quad\text{where}\quad C = \sum_{\community_i\in\communities} \left(\abs{\community_i} - 1\right)
\]
and $k$ is the maximum number of sets in $\communities$ that have a non-empty intersection.
\end{corollary}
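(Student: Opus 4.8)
The plan is to turn both inputs, Proposition~\ref{prop:optimal} and Proposition~\ref{prop:kapprox}, into statements about the single quantity $\sum_{H} c(H)$ and then combine them by one line of algebra. The only preparatory work is to extend the edge-count identity of Proposition~\ref{prop:optimal} from the optimal hyperedge family $\mathcal{I}$ to the greedy family $\mathcal{J}$.

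First I would record the identity
\[
  \abs{D_\mathcal{J}} = C - \sum_{H \in \mathcal{J}} c(H),
  \qquad C = \sum_{\community_i \in \communities}\!\left(\abs{\community_i} - 1\right),
\]
for the greedy solution. This is exactly the relation proved for $\mathcal{I}$ in Proposition~\ref{prop:optimal}, and its proof never used optimality: $C$ is the number of directed edges needed when each community is served by its own private star, and merging the stars of a subcollection $H$ that shares a common vertex saves precisely $c(H)$ edges. To keep $D_\mathcal{J}$ feasible for {\distarprb} I would assume, without loss of generality, that $\mathcal{J}$ (and likewise $\mathcal{I}$) is a partition of $\communities$, padding it with singleton hyperedges $\set{\community_i}$; each such singleton has $c(\set{\community_i}) = 0$, so it leaves every sum unchanged while guaranteeing that every community is covered by exactly one star.

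With the identity in hand the rest is mechanical. Since $\mathcal{J}$ is produced by {\hypergreedy}, Proposition~\ref{prop:kapprox} gives $\sum_{H \in \mathcal{I}} c(H) \le k \sum_{H \in \mathcal{J}} c(H)$, equivalently $\sum_{H \in \mathcal{J}} c(H) \ge \tfrac{1}{k}\sum_{H \in \mathcal{I}} c(H)$. Substituting this into the identity and then using $\sum_{H \in \mathcal{I}} c(H) = C - \abs{D^\ast}$ from Proposition~\ref{prop:optimal} yields
\[
  \abs{D_\mathcal{J}} \le C - \frac{1}{k}\sum_{H \in \mathcal{I}} c(H)
  = C - \frac{1}{k}\left(C - \abs{D^\ast}\right)
  = \frac{k-1}{k}\,C + \frac{1}{k}\,\abs{D^\ast},
\]
which is the claimed bound.

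It remains to reconcile the two meanings of $k$. In Proposition~\ref{prop:kapprox} the factor is $k = \max_{H \in \mathcal{H}} \abs{H}$, the size of the largest hyperedge; under the mapping of {\distarprb} to {\hyperprb} the ground set is $\communities$ and a hyperedge is a subcollection of communities with non-empty common intersection, so $\abs{H}$ counts communities meeting at a common vertex and $\max_{H \in \mathcal{H}} \abs{H}$ is exactly the maximum number of sets in $\communities$ with non-empty intersection, as stated. I expect the main (and still modest) obstacle to be the first step --- justifying the edge-count identity for the non-optimal, partitioned $\mathcal{J}$ --- rather than the concluding algebra, which is routine.
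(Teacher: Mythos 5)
Your proof is correct and takes essentially the same route the paper intends: the paper states this corollary as an immediate consequence of Propositions~\ref{prop:optimal} and~\ref{prop:kapprox}, and your chain---extending the edge-count identity to the greedy family (padded with zero-score singletons for coverage), applying the $k$-approximation bound, and finishing with the algebra---is exactly that argument, including the correct identification of $k = \max_{H \in \mathcal{H}} \abs{H}$ with the maximum number of overlapping communities. The only cosmetic point is that you need just the inequality $\abs{D_\mathcal{J}} \le C - \sum_{H \in \mathcal{J}} c(H)$ rather than equality (stars assigned the same center could share directed edges), and this inequality is what Lemma~\ref{lem:star} plus a union bound yields without any appeal to optimality, exactly as you observe.
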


\spara{Putting the pieces together.}
The pseudo-code of {\matching} is shown in Algorithm~\ref{algo:matching}.  
In the first step, the algorithm invokes {\hypergreedy}
and obtains a solution to the {\hyperprb} problem.
This solution is then translated into a solution to the {\distarprb} problem
(function $\texttt{H2D}$).
Finally, the solution to {\distarprb} is translated into a solution to the {\starprb}
by transforming each directed edge in $D_\mathcal{J}$ 
into an undirected edge and removing duplicates
(function $\texttt{D2E}$).
We have the following result.
\begin{algorithm}[t]
	\caption{\label{algo:matching}The {\matching} algorithm for {\starprb}.}
\KwIn{$G_0=(V,E_0)$ and $\communities = \{\community_1,\ldots , \community_\ell\}$}
\KwOut {Graph $G=(V,E)$ such that $E(\community_i)$ contains a star, for all $i\in 1,\ldots ,\ell$.} 
	$\mathcal{J}=$ {\hypergreedy}$(\communities)$; \\
	$D_\mathcal{J} = \texttt{H2D}(\mathcal{J})$;  \\
	$E = \texttt{D2E}(D_\mathcal{J})$; \\
		%\tcc{update search space of $\mathcal{H}$ to keep $\mathcal{J}$ disjoint}
	\Return $G=(V,E)$;
	%\end{algorithmic}
\end{algorithm}
\begin{proposition}
\label{prop:undirbound}
Let $E^\ast$ be the optimal solution of the {\starprb} problem.
%Let $\mathcal{J}$ be the greedy solution to the corresponding {\hyperprb} problem,
Let $E$ be the output of the {\em\matching} algorithm. Then,
\[
	\abs{E} \leq \frac{k - 1}{k} C + \frac{2}{k}\abs{E^\ast}, \quad\text{where}\quad C = \sum_{\community_i\in\communities} (\abs{\community_i} - 1)
\]
and $k$ is the maximum number of sets in
$\communities$ that have a non-empty intersection.
\end{proposition}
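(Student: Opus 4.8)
The plan is to assemble the claimed inequality by chaining three facts that are already available: the edge count lost when undirecting, Corollary~\ref{cor:dirbound}, and Observation~\ref{observation:relationship}. Recall the structure of the {\matching} algorithm: it first computes the greedy hyperedge collection $\mathcal{J}$ via {\hypergreedy}, then converts it into a directed sparsifier $D_\mathcal{J} = \texttt{H2D}(\mathcal{J})$, and finally undirects it via $E = \texttt{D2E}(D_\mathcal{J})$. Each conversion is governed by an inequality, and stacking them produces exactly the target bound.

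First I would establish that $\abs{E} \leq \abs{D_\mathcal{J}}$. This is immediate from the definition of $\texttt{D2E}$: it replaces each directed edge $(u \rightarrow v) \in D_\mathcal{J}$ by the undirected edge $\{u,v\}$ and then removes duplicates. Each directed edge thus yields at most one undirected edge, so undirecting cannot increase the count and deduplication can only decrease it. Along the way I would also note that feasibility is preserved: for every community $\community_i$, the directed star in $D_\mathcal{J}$ centered at some $c_i$ (reaching all of $\community_i \setminus \{c_i\}$) survives undirection, so $E(\community_i)$ still contains a spanning star and $G=(V,E)$ is a valid {\starprb} solution. Next I would invoke Corollary~\ref{cor:dirbound}, giving $\abs{D_\mathcal{J}} \leq \frac{k-1}{k}C + \frac{1}{k}\abs{D^\ast}$ with $C = \sum_{\community_i\in\communities}(\abs{\community_i}-1)$ and $D^\ast$ the optimal {\distarprb} solution. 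Finally, Observation~\ref{observation:relationship} supplies $\abs{D^\ast} \leq 2\abs{E^\ast}$. Substituting yields
\[
\abs{E} \leq \abs{D_\mathcal{J}} \leq \frac{k-1}{k}C + \frac{1}{k}\abs{D^\ast} \leq \frac{k-1}{k}C + \frac{2}{k}\abs{E^\ast},
\]
which is precisely the stated inequality.

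Since every inequality in the chain has already been proved, there is essentially no genuine obstacle here; the proof is a short composition. The one point that deserves a careful sentence is the first step, namely confirming both that $\texttt{D2E}$ cannot increase the edge count and that it returns a feasible solution — the doubling relative to the directed problem is already paid for inside Observation~\ref{observation:relationship}, so the only new loss incurred by the undirected algorithm is this (harmless) projection step. Once the $\texttt{D2E}$ bound and the optima ratio are both in hand, the arithmetic collapses directly to the claim.
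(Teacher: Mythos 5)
Your proof is correct and follows exactly the paper's argument: the paper likewise chains $\abs{E}\leq\abs{D_\mathcal{J}}$ (from ignoring directions and removing duplicates), Corollary~\ref{cor:dirbound}, and Observation~\ref{observation:relationship} to obtain the bound. Your added remark that \texttt{D2E} preserves feasibility is a nice explicit touch the paper leaves implicit, but it does not change the approach.
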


\spara{Running time.}
The running time of the {\matching}  algorithm
is dominated by {\hypergreedy}.
The other two steps (lines 2 and 3 in Algorithm~\ref{algo:matching})
require linear time with respect to $|V|$.
{\hypergreedy} can be implemented with a priority queue. 
In each step we need to extract the maximum-weight hyperedge, 
and update all intersecting hyperedges by removing any common sets. 
The number of maximal hyperedges in  $\mathcal{H}$ is at most $|V|$  
(one for each vertex~$v$), 
and assuming that the maximum number of intersecting hyperedges is bounded by $c$, 
the total running time of the algorithm is $\bigO(|V|\ell\log|V| + \ell\sum \abs{\community_i})$.
% details are omitted for space constraints.

\iffalse
For solving the matching problem, {\hypergreedy} first needs to create all the hyperedges in $\mathcal{H}$.
Note that each hyperedge is formed by a non-empty intersection of sets of~$\communities$.
Hence, we can enumerate all maximal hyperedges of $\mathcal{H}$, 
starting with a vertex $v\in V$, and considering all the sets of $\communities$ that $v$ belongs. 
Since we are considering maximal hyperedges, there are at most $|V|$ such hyperedges; 
one for each vertex~$v$.

The iterative steps of {\hypergreedy} can be implemented with a priority queue. 
In each step we need to extract the maximum-weight hyperedge, 
and update all intersecting hyperedges by removing any common sets. 
One way to update the hyperedges is to visit each vertex in the set that needs
to be deleted and each hyperedge containing the set. The total time of updating $c(H)$ is
at most $\bigO(\ell\sum \abs{\community_i})$, where $\ell$ is the maximum number of
hyperedges in the queue sharing a common set with a given hyperedge. In theory, $\ell$ can
be as large as $\abs{V}$; in practice, $\ell$ is small. In addition, the hyperedges
in the queue need to be updated.
Thus, the total running time of the algorithm is $\bigO(|V|\ell\log|V| + \ell\sum \abs{\community_i})$. 
\fi

% \input{weighted}
\section{Experimental evaluation}
\label{sec:exp}

In this section we discuss 
the empirical performance of our methods.
Our experimental study is guided
by the following questions.

\squishlist
\item[{\bf Q1.}]
How do our algorithms compare against competitive sparsification baselines
that also aim at preserving the community structure?

\item[{\bf Q2.}]
How well is the structure of the sparsified network preserved
compared to the structure of the original network?

% \item[{\bf Q3.}]
% What is the impact of the 
% input communities to the sparsification process?

\item[{\bf Q3.}]
What are specific case studies that
support the motivation of our problem formulation?
\squishend

%both in terms of our objective (i.e., number of edges)
%as well as in terms of the properties of the sparsified networks.
\noindent
We note that the implementation of the algorithms and all datasets used 
% for validation 
will be made publicly available.

%\subsection*{Experimental methodology}

%Usually, these datasets are used for evaluating community-detection algorithms.
%one starts with the network $\networkgt$, 
%applies a community-detection algorithm to find communities on $\networkgt$, 
%and then uses the sets in $\sets$ as ground-truth communities to evaluate the results of that algorithm.
%Here, we use the same datasets to test the reverse process:
%we consider the sets {\sets} as input to our algorithms,  which
 %construct a network $G$; then we 
%compare the properties of $G$ with those of {\networkgt}, which we consider as \emph{ground truth}.

\begin{table*}[t]
	\setlength{\tabcolsep}{0pt}
	\caption{Network characteristics. $|V|$: number of nodes; $|\edges|$:
	number of edges in the underlying network; $|\indedges|$: the number of edges
	induced by communities; $C$: the number of connected components; $\ell$: number
	of sets (communities); $\avg(\alpha_0)$: average density of the ground-truth
	subgraphs induced by the communities; $s_{\min}$, $s_{\avg}$: minimum and average
	set size; $t_{\max}$, $t_{\avg}$: maximum and average participation of a
	vertex to a set.}
	\centering
	{\small\begin{tabular*}{\textwidth}{@{\extracolsep{\fill}}l l rrrrrrrrr}
		\toprule 
		Dataset & $|V|$ &$|\edges|$& $|\indedges|$ & $C$ & $\ell$&$\avg(\alpha_0)$ & $s_{\min}$&$s_{\avg}$& $t_{\max}$&$t_{\avg}$\\		
		\midrule
		{\amazon} (\dataset{1}) & 10001 & 25129 & 17735 & 7 & 11390 & 0.769 & 2 & 3.52 & 20 & 4.01\\
		{\DBLP} (\dataset{2}) & 10001 & 27687 & 22264 & 1 & 1767 & 0.581 & 6 & 7.46 & 10 & 1.31\\
		{\youtube} (\dataset{3}) & 10002 & 72215 & 15445 & 1 & 5323 & 0.698 & 2 & 4.02 & 82 & 2.14\\		
		{\KDD} (\dataset{4}) & 2891 & 11208 & 5521 & 58 & 8103 & 0.178 & 2 & 31.16 & 1288 & 137.00\\
		{\ICDM} (\dataset{5})  & 3140 & 10689 & 5079 & 112 & 8623 & 0.183 & 2 & 32.46 & 1339 & 139.10\\
		{\FBcirc} (\dataset{6}) & 4039 & 88234 & 55896 & 1 & 191 & 0.640 & 2 & 23.15 & 44 & 1.53\\
		{\FBfeat} (\dataset{7}) & 4039 & 88234 & 84789 & 1 & 1245 & 0.557 & 2 & 29.78 & 37 & 9.21\\
		{\lastFMart} (\dataset{8}) & 1892 & 12717 & 5253 & 20 & 7968 & 0.047 & 2 & 8.29 & 1147 & 36.73\\
		{\lastFMtag} (\dataset{9}) & 1892 & 12717 & 7390 & 20 & 2064 & 0.053 & 2 & 13.60& 50 & 15.43\\
		{\DBbook} (\dataset{10}) & 1861 & 7664 & 1213 & 62 & 8337 & 0.069 & 2 & 3.34 & 58 & 15.32\\
		{\DBtag} (\dataset{11}) & 1861 & 7664 & 6293 & 62 & 14539 & 0.032 & 2 & 13.79 & 658 & 107.60\\	
		{\birds} (\dataset{12}) & 1052 & 44812 & 44812 & 1 & 49578 & 1.0 & 2 & 6.03 & 938 & 284.30\\
		{\cocktails} (\dataset{13}) & 334 & 3619 & 3619 & 1 & 1600 & 1.0 & 2 & 3.73 & 427 & 17.89\\			
		\bottomrule
	\end{tabular*}}
	\label{tab:stats}
\end{table*}

\iffalse
\begin{table}[t]
	\caption{Characteristics of sampled sets $\sets$. $\min(s)$, $\avg(s)$: minimum and average set size; $\max(t)$, $\avg(t)$: maximum and average participation of a vertex to a set.}
	\centering
	{\small\begin{tabular}{l rrrrrr}
		\toprule 
		Dataset && $\min(s)$&$\avg(s)$&&$\max(t)$&$\avg(t)$\\		
		\midrule
		{\amazon}1 && 2 & 3.578 && 18 & 2.897 \\
		{\amazon}2 &&2 & 3.523  && 20 & 4.013\\
		{\DBLP}1 && 6  & 7.550  && 10 & 1.307\\
		{\DBLP}2 && 6 & 7.470  && 10 & 1.320\\
		{\youtube}1 && 2  & 4.392  && 82 & 2.013\\
		{\youtube}2 && 2  & 4.023  && 82 & 2.141\\
		{\cocktails} && 2 & 3.736 && 427 & 17.90\\
		\bottomrule
	\end{tabular}}
	\label{tab:setsStats}
\end{table}
\fi

\spara{Datasets.}
%We experiment with 3 datasets from the SNAP dataset %collection,\footnote{\url{snap.stanford.edu/data}}  
%provided with the ground-truth communities~\cite{yang2015defining}.
We use $13$ datasets (\dataset{1}--\dataset{13}); each dataset
consists of a network $G=(V,E)$ and a set of communities {\communities}. 
We describe these datasets below, 
while their basic characteristics are shown in 
Table~\ref{tab:stats}.

\smallskip
\noindent
$\bullet$ {\KDD} and {\ICDM} are subgraphs of the DBLP co-authorship network. 
Edges represent co-authorships between authors.
Communities are formed by keywords that appear in paper abstracts.
% all researchers who wrote a paper with
% the same keyword belong in the community of that keyword.

\smallskip
\noindent
$\bullet$ \FBcirc and \FBfeat are Facebook ego-networks available 
at the SNAP repository.\footnote{\url{snap.stanford.edu/data/egonets-Facebook.html}} 
% The original dataset is constructed by surveying Facebook users.
In {\FBcirc} the communities are social-circles of users.
% a user can have multiple friends lists, e.g., family, friends, colleagues. 
In {\FBfeat} communities are formed by user profile features.

\smallskip
\noindent
$\bullet$ % The network of the 
{\lastFMart} and {\lastFMtag} are friendship networks of last.fm 
users.\footnote{\url{grouplens.org/datasets/hetrec-2011/}}
A community in {\lastFMart} and {\lastFMtag} is formed by users who listen to the
same artist and genre, respectively.
 
\smallskip
\noindent
$\bullet$ 
{\DBbook} and {\DBtag} 
are friendship networks of Delicious users.\footnote{\url{www.delicious.com}}
A community in \DBbook and \DBtag is formed by users who use the same 
bookmark and keyword, respectively.

\smallskip
Additionally, 
% where communities are extracted from metadata (usually features associated with vertices), 
we use % subsets of 
SNAP datasets with ground-truth communities. 
To have more focused groups, 
we only keep communities with size less than~10.
To avoid having disjoint communities, 
% we sample communities as follows:
we start from a small number of seed communities and iteratively add other communities
that intersect at least one of the already selected.
We stop when the number of vertices reaches $10\,$K.
In this way we construct the following datasets:

\smallskip
\noindent
$\bullet$ \amazon: 
Edges in this network represent pairs of frequently co-purchased products.
Communities represent product categories as provided by Amazon.

\smallskip
\noindent
$\bullet$ \DBLP:
This is also a co-authorship network.
Communities are defined by publication venues.

\smallskip
\noindent
$\bullet$ \youtube: 
This is a social network of Youtube users.
Communities consist of user groups created by users.

\smallskip
For the case studies we use the following datasets.

\smallskip
\noindent
$\bullet$ \cocktails:\footnote{\url{www-kd.iai.uni-bonn.de/InVis}}
Vertices represent drink ingredients
and communities correspond to ingredients appearing in cocktail recipes. 
The \cocktails dataset does not have a ground-truth network.
%\smallskip

\smallskip
\noindent
$\bullet$ \birds: 
This dataset consists of group sightings of \emph{Parus Major} 
(great tit)~\cite{Farine150057}. 
The dataset also contains gender, age,
and immigrant status of individual birds.
% We preprocess the data by removing singleton and duplicate groups. 

\spara{Experimental setup.}
All datasets consist of a graph $G=(V,\edges)$ and a set of communities
$\communities=\{\community_1,\ldots ,\community_\ell\}$. 
The output of our algorithms is a sparsified graph
$\sparseG=(V, \sparseedges)$. 
Clearly, any reasonable sparsification algorithm 
would include in $\sparseedges$ only edges that belong in at least one of the graphs
$G[\community_i]=(\community_i, E[\community_i])$.
% i.e., the subgraph of $G$ induced by $\community_i$.
Accordingly, we define $\indedges$ to be the set of edges belonging in at least one such 
subgraph: $\indedges = \cup_{i=1\ldots \ell}E[\community_i]$.

{\densprb} requires a density threshold $\alpha_i$ for each community 
$\community_i\in\communities$.
We set this parameter pro\-por\-tio\-nal to the density ${D_i}$ of $G[\community_i]$.
%The motivation is to provide a minimal amount of information about each set
%with respect to its structure in {\networkgt}.
We experiment with % lowering the density threshold and setting
$\alpha_i=\epsilon\,{D_i}$, for $\epsilon=0.5$, $0.7$, and $0.9$.
%Our goal is to evaluate 
%the overall density of the reconstructed network
%as a function of the target density threshold.

{\starprb} aims to find a star in every community $G[\community_i]$. 
If no star is contained in $G[\community_i]$
then the community $\community_i$ is discarded.

\spara{Baseline.} 
We compare our algorithms with a sparsification method, 
proposed by Satuluri et al.~\cite{satuluri2011local} to enhance community detection, 
and shown in a recent study 
by Lindner et al.~\cite{lindner2015structure}
to outperform its competitors 
and to preserve well network cohesion.  
The algorithm, which we denote by \LS, 
considers local similarity of vertex neighborhoods.
The reader should keep in mind that
{\LS} is not optimized for the problems we define in this paper, 
and in fact, it does not use the communities {\communities} as input. 
Nonetheless we present {\LS} in our evaluation, 
as it is a state-of-the-art sparsification method
that aims to preserve community structure.

%We compare our algorithms with simple baselines.
%We compare {\densitygreedy} with {\densityrandom}, which adds random edges for each input set.
%We also compare {\matching} with {\starrandom}, which adds random stars for each input set. 

% \spara{Evaluation:}
% First, we evaluate the performance of our algorithms with respect to 
% the number of edges in the sparsified network 
% $\sparseG=(V,\sparseedges)$. 
% For this we report the ratio $\ratio=|\sparseedges|/|\indedges|$.
% The smaller the value of $\ratio$ the better the performance of an algorithm.
% Note that is a pessimistic measure, as 
% the number of edges $|\edges|$ in the original graph is generally larger than $|\indedges|$.
% However, as explained above, 
% we assume that every reasonable sparsification algorithm will remove edges 
% that do not belong in any $G[\community_i]$.  
 
\setlength{\tabcolsep}{0pt}
\newlength{\figlength}
\setlength{\figlength}{0.25\textwidth}

\begin{figure}[t]
\includegraphics[width=\columnwidth]{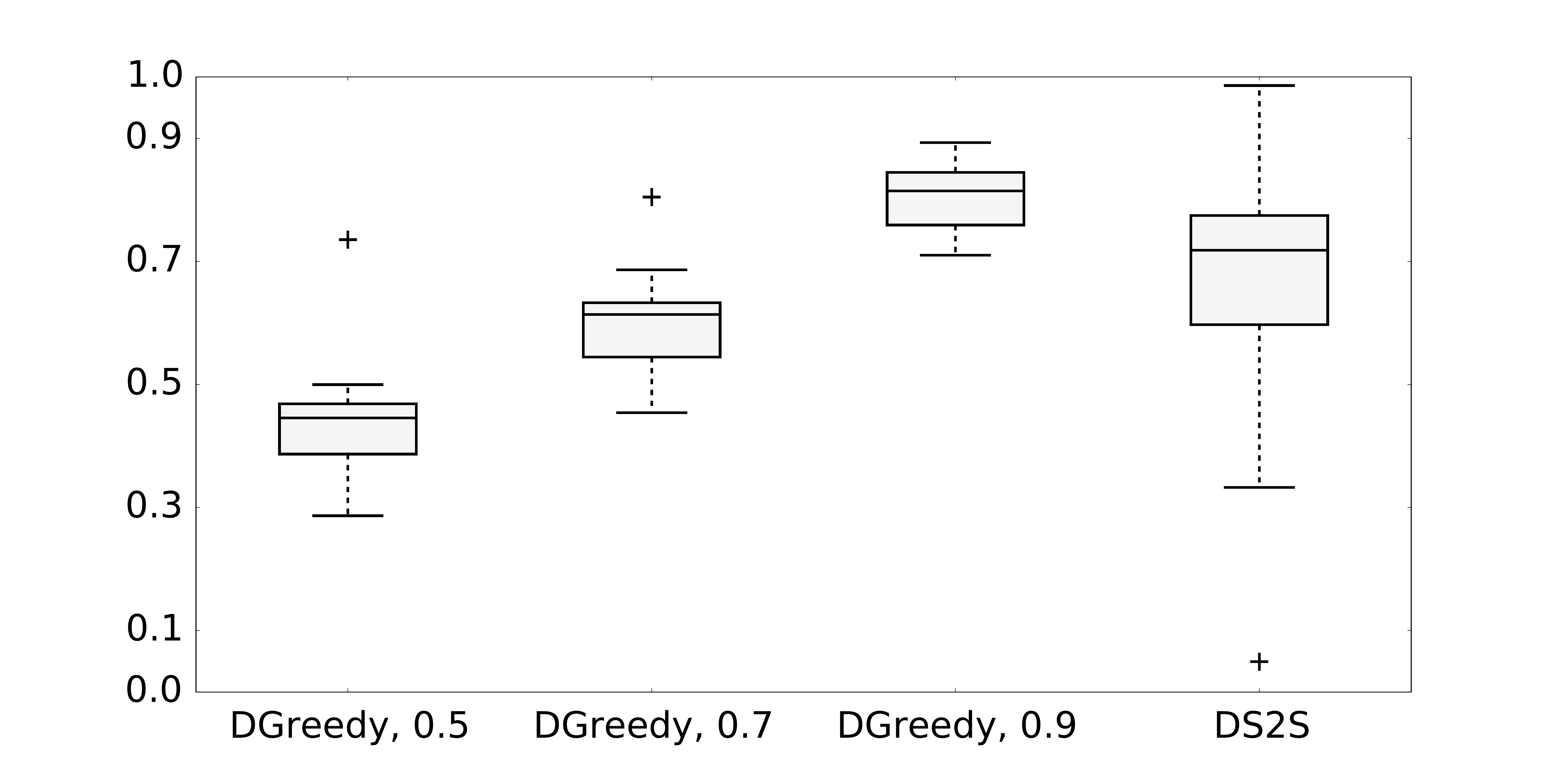}
\caption{\label{figure:sparsification-level}
Amount of sparsification 
shown as distribution of the values $\ratio=|\sparseedges|/|\indedges|$
over the different datasets for each problem setting.}
\end{figure}
\iffalse
\begin{figure*}[t]
	\begin{center}
		\setlength{\tabcolsep}{0pt}
		\setlength{\figlength}{0.25\textwidth}
		\begin{tabular}{c@{\hspace*{1mm}}*{4}{c}}
			Star & Density, $\epsilon=0.5$ & Density, $\epsilon=0.7$ & Density, $\epsilon=0.9$\\
			\includegraphics[width=\figlength]{}                 
			&\includegraphics[width=\figlength]{}
			&\includegraphics[width=\figlength]{} 
			&\includegraphics[width=\figlength]{}
		\end{tabular}
	\end{center}
	%\vspace{-1.5em}
	\caption{Values of {\ratio} for datasets \dataset{1}--\dataset{11}, obtained on different dataset by greedy algorithms and baseline.}
	\label{fig:ratio}
\end{figure*}
\fi
\begin{figure*}[t]
	\begin{center}
		\setlength{\tabcolsep}{0pt}
		\setlength{\figlength}{0.25\textwidth}
		\begin{tabular}{c@{\hspace*{1mm}}*{4}{c}}
			Star & Density, $\epsilon=0.5$ & Density, $\epsilon=0.7$ & Density, $\epsilon=0.9$\\
			\includegraphics[width=\figlength]{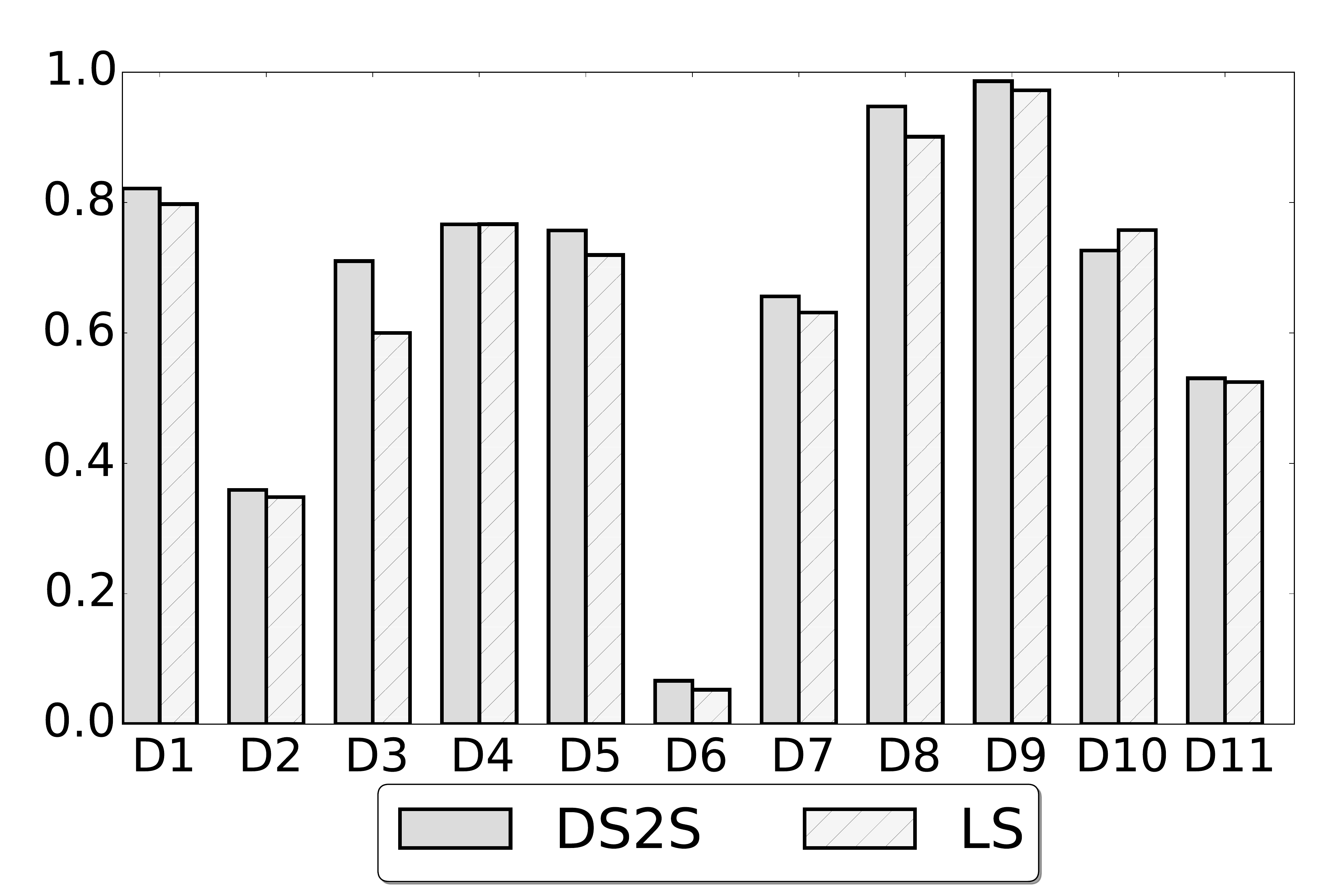}             
			&\includegraphics[width=\figlength]{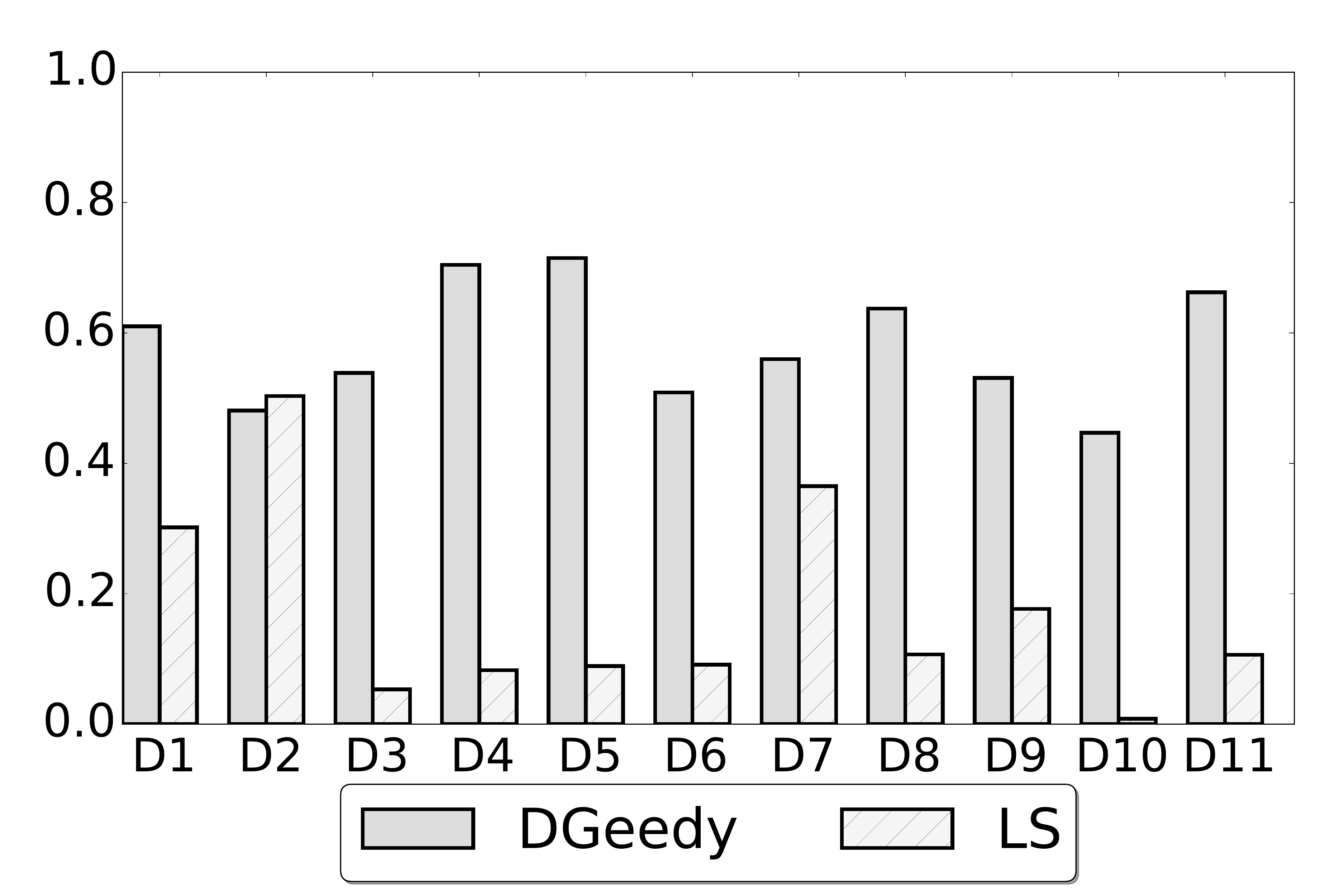}
			&\includegraphics[width=\figlength]{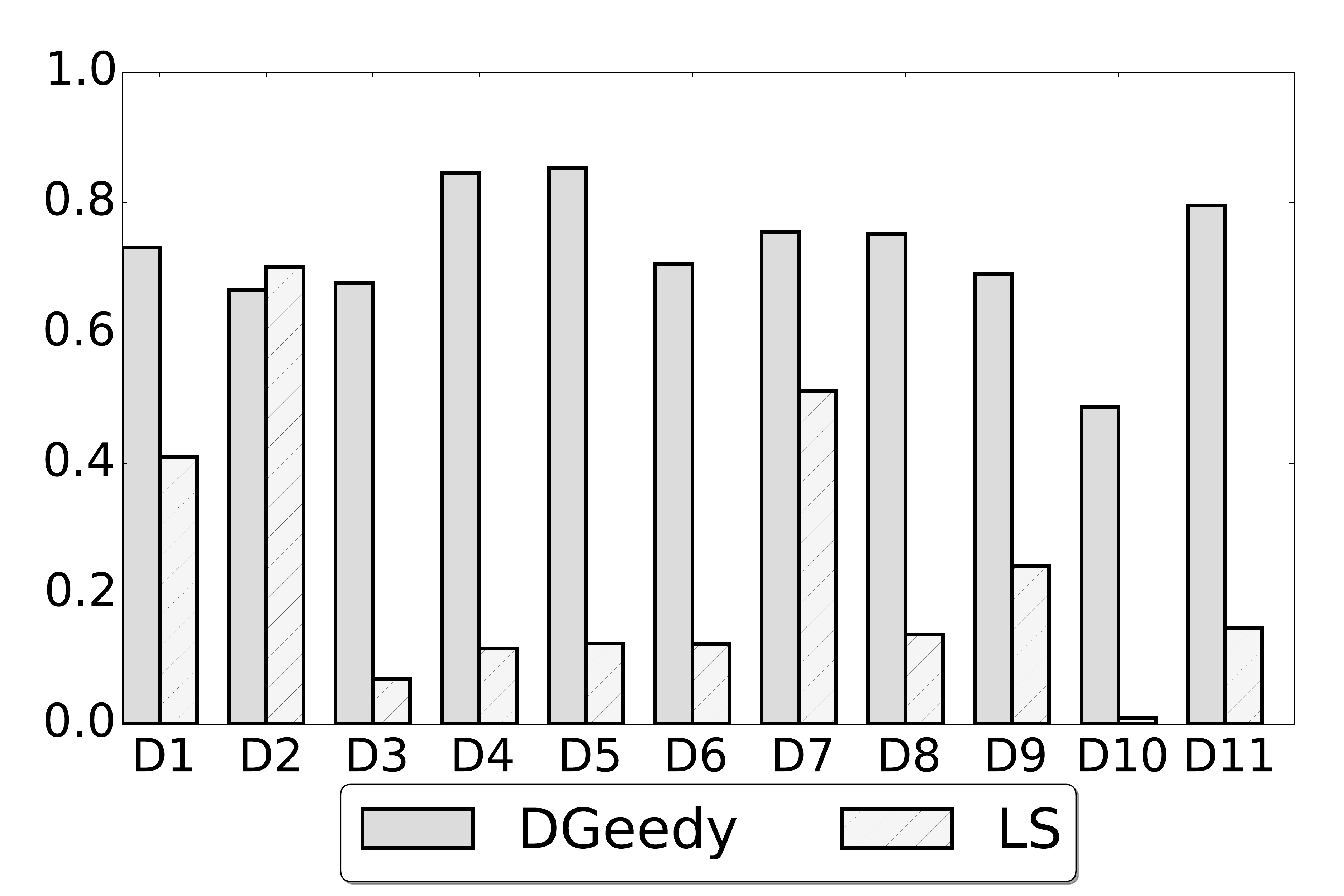} 
			&\includegraphics[width=\figlength]{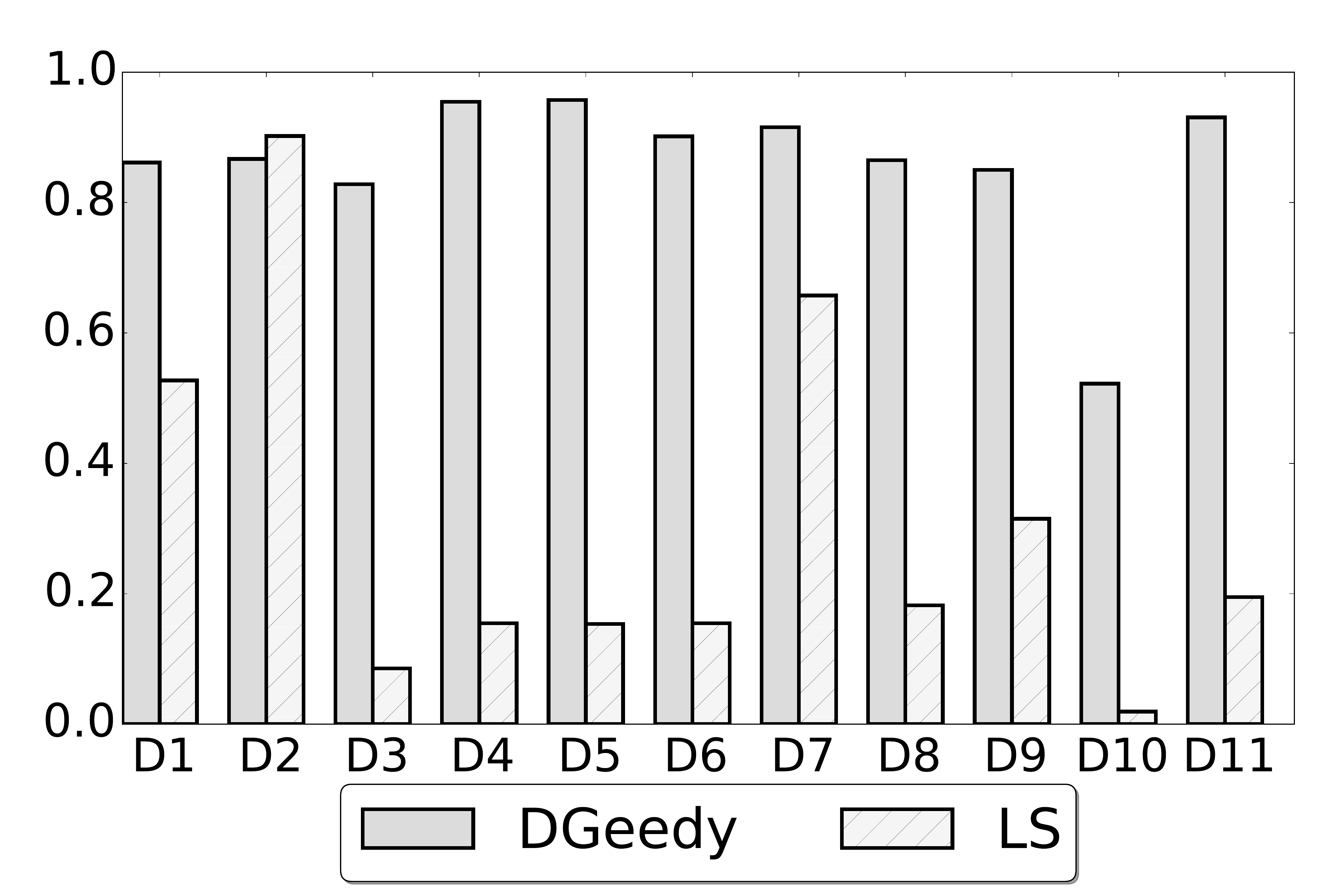}
		\end{tabular}
	\end{center}
	%\vspace{-1.5em}
	\caption{Relative degree $\reldegree$ within communities, on the sparsified graphs {\sparseG} produced by $\densitygreedy$ and baseline $\LS$ for datasets \dataset{1}--\dataset{11}. Measure $\reldegree$ is defined as average degree within community in the sparsified graph divided by average degree within community in the input graph. Larger values of $\reldegree$ correspond to better preserved community sets.}
	\label{fig:avgdegreeinset_relative}
\end{figure*}

\begin{figure*}[t]
	\begin{center}
		\setlength{\tabcolsep}{0pt}
		\setlength{\figlength}{0.25\textwidth}
		\begin{tabular}{c@{\hspace*{1mm}}*{4}{c}}
			Star & Density, $\epsilon=0.5$ & Density, $\epsilon=0.7$ & Density, $\epsilon=0.9$\\
			\includegraphics[width=\figlength]{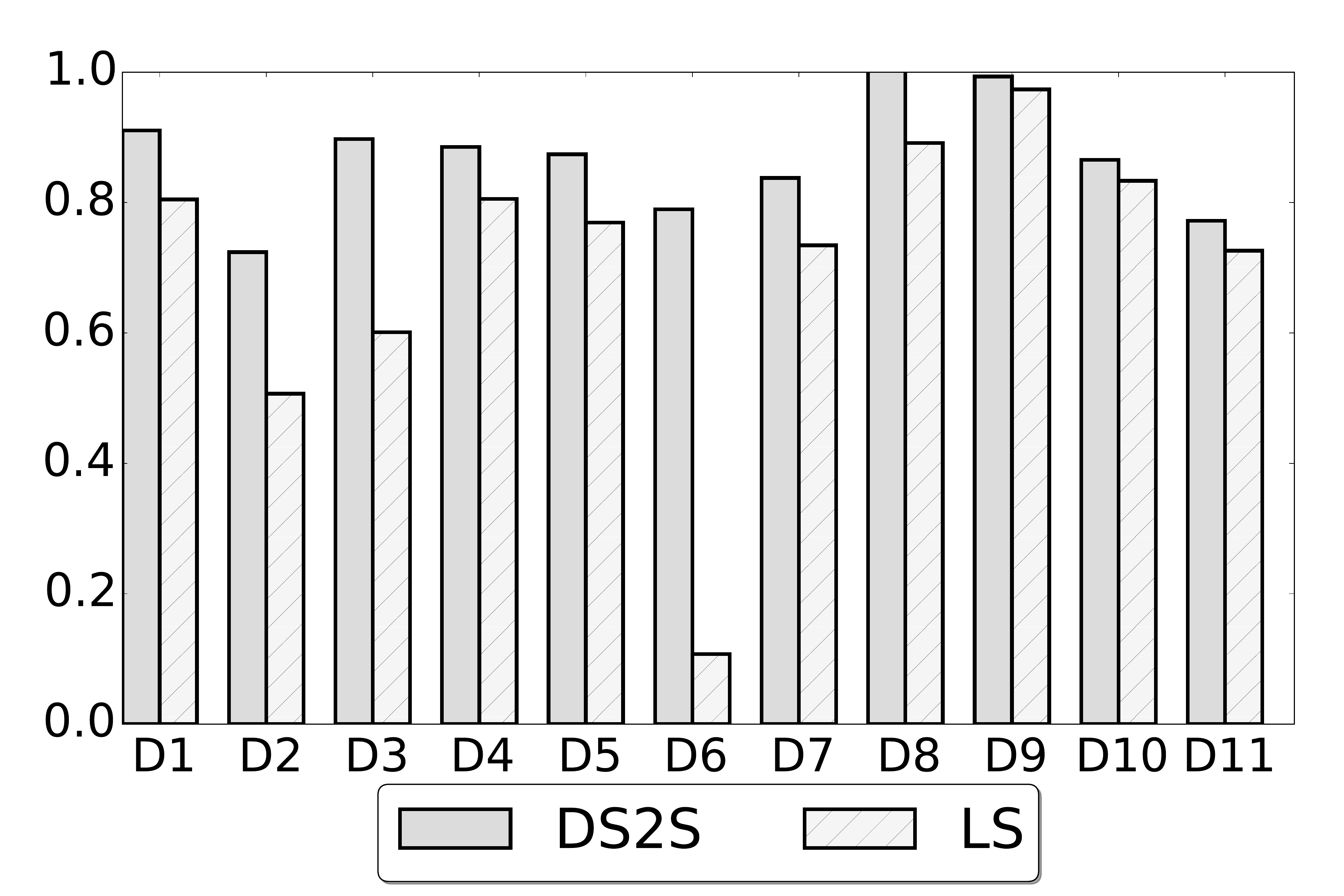}              
			&\includegraphics[width=\figlength]{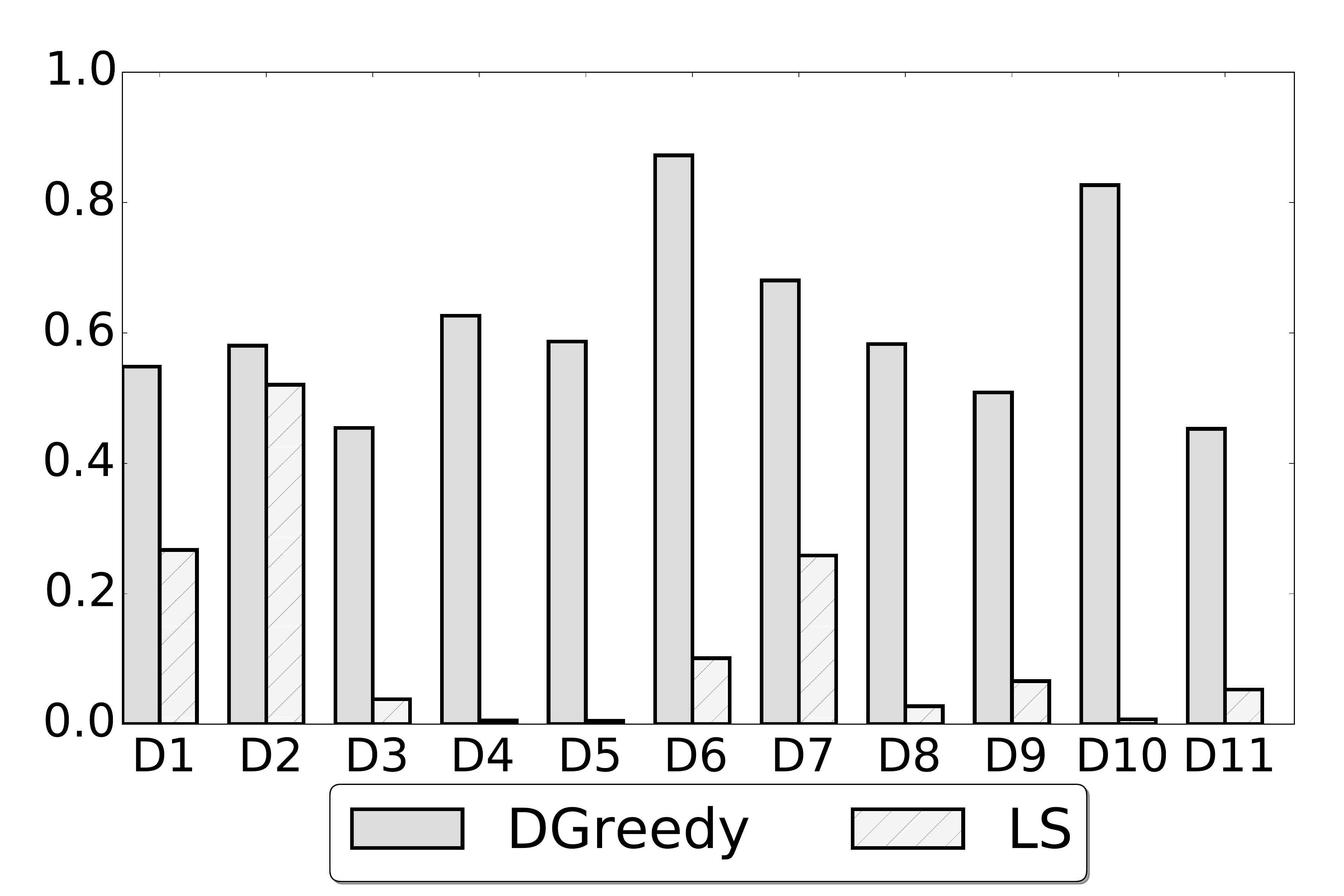}
			&\includegraphics[width=\figlength]{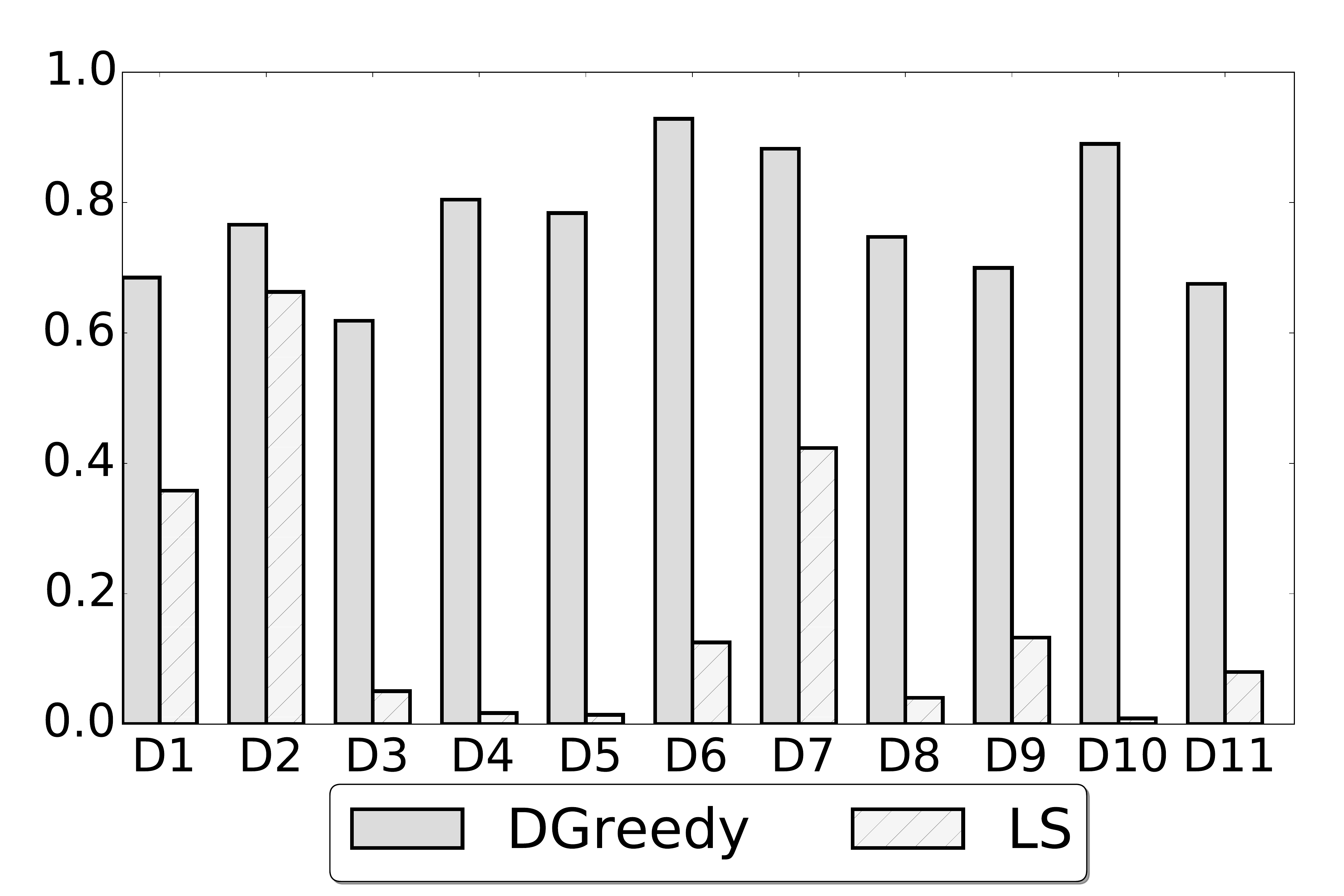} 
			&\includegraphics[width=\figlength]{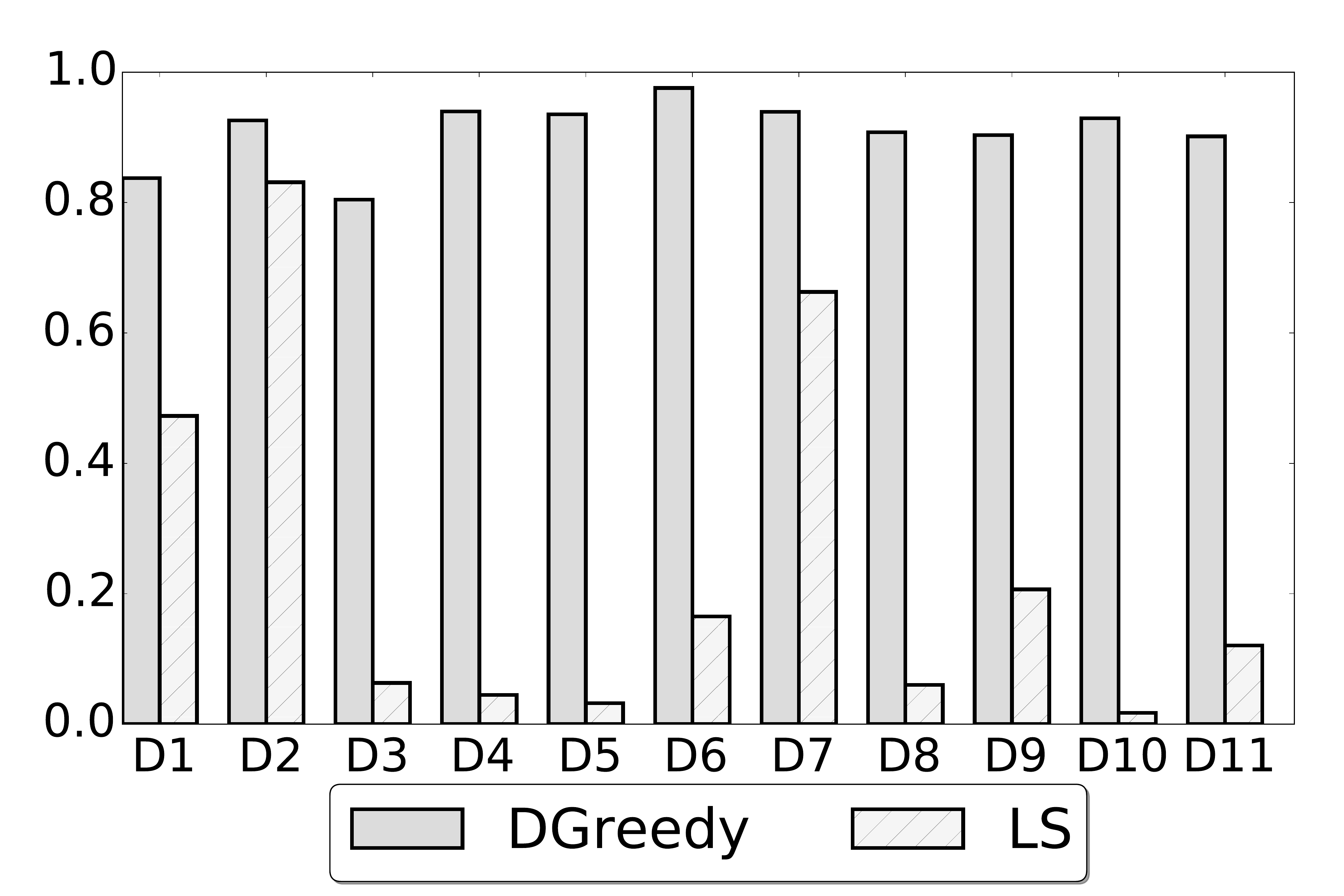}
		\end{tabular}
	\end{center}
	%\vspace{-1.5em}
	\caption{Relative paths length $\relSPlen$, on the sparsified graphs {\sparseG} produced by $\densitygreedy$ and baseline $\LS$ for datasets \dataset{1}--\dataset{11}. Measure $\relSPlen$ is defined as harmonic mean of within communities shortest-path lengths in the input graph divided by corresponding harmonic mean in the sparsified graph. Again, larger values of $\relSPlen$ correspond to better preserved community sets.}
	\label{fig:Hspinset_relative}
\end{figure*}

\spara{Amount of sparsification.}
Starting with input network
$\indgraph = (V, \indedges)$ and communities {\communities}
we compute a sparsified network {\sparseG}, 
for datasets \dataset{1}--\dataset{11}. 
We solve the {\starprb} problem using the {\matching}
and the  {\densprb} problem using the {\densitygreedy} algorithm
for the three different values of $\epsilon$ we consider.
% ($\epsilon=0.5, 0.7, 0.9$).
We quantify the amount of sparsification 
by the ratio $\ratio=|\sparseedges|/|\indedges|$, 
which takes values from 0 to 1, and  
the smaller the value of $\ratio$ the larger the amount of quantification. 
Boxplots with the distribution of the values of $\ratio$
for the four different cases 
({\starprb}, and {\densprb} with $\epsilon=0.5, 0.7, 0.9$)
are shown in Figure~\ref{figure:sparsification-level}.

The {\LS} baseline takes as parameter the number of edges in the sparsified network. 
Thus, for each problem instance 
we ensure that the sparsified network obtained by {\LS}
have the same number of edges
(up to a 0.05 error margin controlled by \LS)
as the sparsified networks obtained by our methods in the corresponding instances.

\iffalse
\spara{Sparsification results:} 
First, we evaluate the performance of our algorithms with respect to 
the number of edges in the sparsified network 
$\sparseG=(V,\sparseedges)$. 
For this we report the ratio $\ratio=|\sparseedges|/|\indedges|$.
The smaller the value of $\ratio$ the better the performance of an algorithm.

Figure~\ref{fig:ratio} shows the values of $\ratio$ achieved for 
datasets \dataset{1}--\dataset{11}. 
The leftmost panel shows the values of $\ratio$
achieved by our algorithm {\matching} and the baseline {\LS} for 
the {\starprb} problem. 
The other three panels show the values of {\ratio} achieved by 
our algorithm {\densitygreedy} and the baseline {\LS}
for the {\densprb} problem
for the three different values of $\epsilon$ we consider.

We can make the following observations:
First, 
we observe that, compared to the underlying network,
significantly fewer edges are needed 
to satisfy our connectivity requirements.
While this behavior is expected for the {\starprb} problem, 
and the {\densprb} problem for small $\epsilon$, 
it also holds for larger values of $\epsilon$. 
This means that edges can be reused among the input communities.
Second, our algorithms 
are clearly better than their corresponding random baselines. 
%create networks that satisfy the same connectivity requirements as the baseline algorithms with 
%a smaller number of edges. For \starprb we significantly restrict choice of possible star centers. 
For example, {\stargreedy} uses around 5\%--10\% less edges than {\starrandom}. 
For \densprb, {\matching}  
uses 20\%--50\% less edges than {\densitygreedy}.
%This may not look impressive, but note that 
%in the extreme case that the input sets do not overlap, 
%no algorithm will perform better than random.
%The overlap between our input sets is small ($t_{\avg} = 1.3$--$4.0$ in Table~\ref{tab:stats}),
%so we see that our algorithms effectively use the limited overlap. 
%In fact the improvement over the baselines correlates with the amount of overlap in the input sets.

%% ET removing the text below I do not think it adds much
%Finally, we measure the effect of lowering the density threshold, via the parameter $\epsilon$.
%As can be seen in Figure~\ref{fig:ratio}, 
%the dependence of the global density is almost linear with respect to $\epsilon$.
%This result indicates that, for the datasets we tried, 
%{\densitygreedy} behaves smoothly with respect to the target density threshold.
\fi 

\spara{Properties of sparsified networks.} 
We start by considering our first two evaluation questions
{\bf Q1} and {\bf Q2}. For this, we
compare our methods with a competitive baseline (\LS), and
we
quantify the amount structure preservation in the sparsified network.

Recall that for an input network
$\indgraph = (V,\indedges)$ and communities {\communities}
we compute a sparsified network~{\sparseG}, 
for datasets \dataset{1}--\dataset{11}.
We {\em compare} the networks~{\sparseG} and {\indgraph}
by computing the {\em average degree} and the 
{\em average shortest-path length} within the communities 
{\communities}.\footnote{The average shortest-path length is 
estimated using the harmonic mean, 
which is robust to disconnected components.}
% (there are many communities that are disconnected in the original graph).
% We compute these statistics for the sparsified graph and the input graph.
The goal is to test whether within-communities statistics
in the sparsified network are close to those in the original network. 
The results for average degree and average shortest path are shown in 
Figures~\ref{fig:avgdegreeinset_relative} and~\ref{fig:Hspinset_relative}, 
respectively.
The leftmost panel in each figure shows the results for the 
the {\starprb} problem, 
while the other three panels show the results 
for the {\densprb} problem, 
for the three different values of $\epsilon$ we consider.

As expected, in the sparsified network, 
average degrees decrease and short-path lengths increase.
For {\densprb}, as $\epsilon$ increases,
both average distance and average shortest-path length in the 
sparsified network come closer to their counterparts in the input network. 
% Distances are large, even for the input network, but this is caused
% by disconnected components within the communities. 
% For {\starprb}, we see in the left panel of Figure~\ref{fig:avgdegreeinset_relative}
% that dataset \dataset{6} is fairly dense, however, by sparsifying it by a star
% makes it significantly sparser.
For the {\starprb} problem the {\LS} baseline is competitive
and in most cases it produces networks whose statistics
are close to the ones of the networks produced by {\matching}.
However, for the {\starprb} problem, 
the {\LS} baseline does not do a particularly good job 
in preserving community structure.

Overall this experiment reinforces our understanding that while 
sparsification is effective with respect to reducing the number of edges, the
properties of the communities in the sparsified network
resemble respective properties in the input network.

\spara{Running time.} 
% According to our experiments \stargreedy~is 2--3 times faster than
% \matching, however \matching~always outperforms \stargreedy. Algorithms
% \densitysort~and \densitygreedy~ require building the same inverted indexes as
% data preprocessing, which results in a negligible difference in the total
% running time, while \densitygreedy~retrieves a smaller number of edges. For all
% reported datasets the total running time of \matching~is under 1 second, while
% \densitygreedy~completes under 5 minutes. The experiments were conducted on
% Intel Xeon 3.30GHz with 15.6GiB of memory. 
%The focus of our experimental evaluation has been on validating the quality of the reconstructed networks, 
%however our experiments verify that 
%the proposed algorithms are also quite efficient.
For all reported datasets the total running time of \matching\ is under 1 second, while
\densitygreedy\ completes in under 5 minutes. 
The experiments are conducted on a machine with Intel Xeon 3.30GHz and 15.6GiB of memory.

% \begin{table}[t]
% 	\caption{Top-10 star centers, discovered by \matching\ algorithm on \cocktails\ dataset. The centers are ordered by the discovered order, with the number of sets a center covers in parentheses.}
% 	\centering
% 	\setlength{\tabcolsep}{0pt}
% 	{\small\begin{tabular*}{\columnwidth}{@{\extracolsep{\fill}}ll}
% 		\toprule
% 		\texttt{vodka} (202) & \texttt{amaretto} (85)\\
% 		\texttt{orange juice} (118) & \texttt{light rum} (58)\\
% 		\texttt{pineapple juice} (86)& \texttt{kahlua} (58)\\
% 		\texttt{bailey's irish cream} (78)& \texttt{blue curacao} (50)\\
% 		\texttt{tequila} (81)& \texttt{grenadine} (60)\\
% 		\texttt{gin} (86)& \texttt{jagermeister} (34)\\
% 		\bottomrule
% 	\end{tabular*}}
% 	\label{tab:cocktails}
% \end{table}

\begin{table}[t]
	\caption{Top-10 star centers, discovered by \matching\ algorithm on \cocktails\ dataset. The centers are ordered by the discovered order, with the number of sets a center covers in parentheses.}
	\centering
	{\small\begin{tabular*}{7cm}{@{\extracolsep{\fill}}ll}
		\toprule
		\texttt{vodka} (202) & \texttt{gin} (86)\\
		\texttt{orange j.} (118) & \texttt{amaretto} (85)\\
		\texttt{pineapple j.} (86) & \texttt{light rum} (58)\\
		\texttt{bailey's} (78) & \texttt{kahlua} (58)\\
		\texttt{tequila} (81)& \texttt{blue curacao} (50)\\
		\bottomrule
	\end{tabular*}}
	\label{tab:cocktails}
\end{table}

\spara{Case studies.}
To address evaluation question {\bf Q3}
we conduct two case studies, 
one presented here and one in the supplementary material.
In both cases there is no underlying network, 
so they can be considered instances of the {\em network design} problem.

\newpage
\noindent
\emph{Cocktails case study.}  
In this case the input communities are defined by the 
ingredients of each cocktail recipe. 
%As reported in Table~\ref{tab:stats}, 
%the $G[\indedges]$,
%which contains an edge between two ingredients if they are used together in at least one cocktail recipe, equals to $G(\indedges)$
%has 334 vertices and 3619 edges. 
%
We first run the {\matching} algorithm on the input sets, 
and we obtain network $\sparseG$ with $1\,593$ edges, 
that is, around 44\,\% of the edges of $G$, giving an average degree of $9.5$.  
The first ten star centers, in the order selected by the {\matching} algorithm are shown in 
Table~\ref{tab:cocktails}. 
The table also shows the number of cocktails that each ingredient serves as a star.
We see that the algorithm selects popular ingredients that form the basis for many cocktails.
A snippet of the reconstructed network
%containing three star centers and five recipes, 
is shown in Figure~\ref{figure:snippet}. 

In order to compare the outputs of {\matching} ($\sparseG_1$)
and {\densitygreedy}, we ran the latter with  
density parameter $\alpha=0.65$.
For this value of $\alpha$ we get $\sparseG_2$ having $1\,420$ edges,
% which is roughly the same as the number of edges in $G_1$ 
so that we can have a more meaningful comparison of $\sparseG_1$ and $\sparseG_2$.
In Figure~\ref{figure:cocktail-degrees}
we depict the degree of each vertex 
in the two reconstructed networks,  $\sparseG_1$ and $\sparseG_2$, 
as a function of their degree in the underlying network $G$.
From the figure, we observe that \emph{some} of the unusually high-degree vertices in $G$ 
maintain
their high degree in $\sparseG_1$;  these are probably the vertices
that the {\matching} algorithm decides that they serve as star centers.
On the other hand, there are other high-degree vertices in $G$ 
that lose their high-degree status in $\sparseG_1$;
these are the vertices that the {\matching} algorithm did not use as star centers.
%Therefore, vertices with the same (high) degree in $G$
%may have large variance in their degrees in $G_1$.
On the other hand, the {\densitygreedy} algorithm 
sparsifies the feasibility network 
much more uniformly and vertices 
maintain their relative degree in $\sparseG_2$.
%with respect to the edges and therefore the vertices in $G$ 
%that have the same degree
%are having more similar degrees in $G_2$.
%We observe that for fixed vertex degree in the feasibility network, 
%its degree in the reconstructed network under {\matching} has higher variance
%than its degree in the reconstructed network under {\densitygreedy}.
%Namely, if a vertex is a hub in $G_0$, 
%the {\matching} algorithm may keep it as a hub in $G_1$ (so as it serves as a star center), 
%or it may demote its degree more than on average 
%(if there is another hub in the vicinity).
%On the other hand, the {\densitygreedy} algorithm 
%sparsifies the feasibility network by demoting the degree of each vertex in a more uniform way.
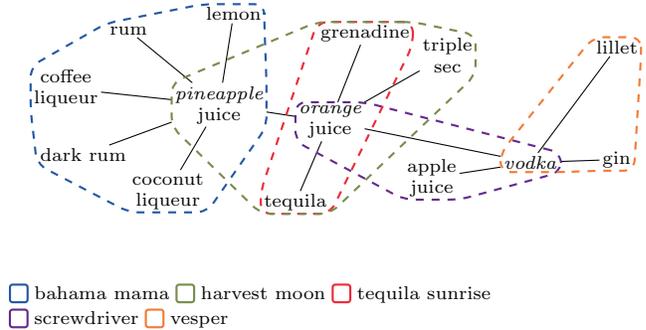
\begin{figure}[t]
%\includegraphics{cocktails}
%\iffalse
\begin{tikzpicture}[yscale = 0.75, xscale = 0.82, rotate = 90, baseline = 32]
\tikzstyle{exnode} = [inner sep = 1pt, font = \scriptsize, align = center]
\tikzstyle{fitnode} = [inner sep = 1pt, rounded corners, dashed, thick, gray]
\tikzstyle{fitline} = [rounded corners, dashed, thick, gray]
\tikzstyle{exedge} = []
\tikzstyle{labnode} = [draw, rounded corners = 1pt, thick]

\definecolor{yafcolor1}{rgb}{0.4, 0.165, 0.553}
\definecolor{yafcolor2}{rgb}{0.949, 0.482, 0.216}
\definecolor{yafcolor3}{rgb}{0.47, 0.549, 0.306}
\definecolor{yafcolor4}{rgb}{0.925, 0.165, 0.224}
\definecolor{yafcolor5}{rgb}{0.141, 0.345, 0.643}
\definecolor{yafcolor6}{rgb}{0.965, 0.933, 0.267}
\definecolor{yafcolor7}{rgb}{0.627, 0.118, 0.165}
\definecolor{yafcolor8}{rgb}{0.878, 0.475, 0.686}
\node[exnode] (n7) at (1.6143, 6.6141) {tequila};
\node[exnode] (n3) at (3.3474, 7.8522) {\em pineapple\\juice};
\node[exnode] (n9) at (4.2298, 4.1479) {triple\\sec};
\node[exnode] (n90) at (3.6375, 10.336) {coffee\\liqueur};
\node[exnode] (n20) at (4.6667, 9.3211) {rum};
\node[exnode] (n310) at (4.375, 1.4) {lillet};
\node[exnode] (n200) at (1.8146, 8.681) {coconut\\liqueur};
\node[exnode] (n1) at (3.0456, 6.0531) {\em orange\\juice};
\node[exnode] (n42) at (2.461, 10.056) {dark rum};
\node[exnode] (n5) at (2.375, 1.40578) {gin};
\node[exnode] (n110) at (2.0531, 4.40405) {apple\\juice};
\node[exnode] (n0) at (2.3225, 2.7986) {\emph{vodka}};
\node[exnode] (n32) at (4.9609, 7.6152) {lemon};
\node[exnode] (n4) at (4.6134, 5.4846) {grenadine};

\draw[exedge] (n0) edge (n1);
\draw[exedge] (n0) edge (n310);
\draw[exedge] (n0) edge (n5);
\draw[exedge] (n0) edge (n110);
\draw[exedge] (n1) edge (n9);
\draw[exedge] (n1) edge (n3);
\draw[exedge] (n1) edge (n4);
\draw[exedge] (n1) edge (n7);
\draw[exedge] (n3) edge (n32);
\draw[exedge] (n3) edge (n200);
\draw[exedge] (n3) edge (n42);
\draw[exedge] (n3) edge (n20);
\draw[exedge] (n3) edge (n90);

%\node[fitnode, yafcolor1] [draw, fit = (n0) (n1) (n110)] {};
%\node[fitnode, yafcolor2] [draw, fit = (n0) (n5) (n310)] {};
%\node[fitnode, inner sep = 2pt, yafcolor3] [draw, fit = (n1) (n3) (n4) (n7) (n9)] {};
%\node[fitnode, yafcolor4] [draw, fit = (n1) (n4) (n7)] {};
%\node[fitnode, yafcolor5] [draw, fit = (n3) (n20) (n32) (n42) (n90) (n200)] {};

\draw[fitline, yafcolor1]
	(n0.north east) --
	(n1.north east) -- (n1.north west) -- (n1.south west) --
	(n110.south west) -- (n110.south east) --
	(n0.south east) -- cycle;

\draw[fitline, yafcolor2]
	(n0.south east) --
	(n5.south west) -- (n5.south east) --
	(n310.south east) -- (n310.north east) -- (n310.north west) --
	(n0.north west) -- (n0.south west) -- cycle;

\draw[fitline, yafcolor4]
	(n4.south east) -- (n4.north east) -- (n4.north west) --
	(n7.north west) -- (n7.south west) -- (n7.south east) --
	cycle;

\draw[fitline, inner sep = 2pt, yafcolor3]
	(n4.north east) -- (n4.north west) --
	(n3.north west) -- (n3.south west) --
	(n7.south west) -- (n7.south east) --
	(n9.south east) -- (n9.north east) --
	cycle;

\draw[fitline, yafcolor5]
	(n32.south east) -- (n32.north east) -- (n32.north west) --
	(n20.north west) --
	(n90.north west) -- (n90.south west) --
	(n42.south west) --
	(n200.south west) -- (n200.south east) -- 
	(n3.south east) -- (n3.north east) -- cycle;

\node[labnode, yafcolor5] (l5) at (0, 11.1)  {};
\node[inner sep = 1pt, font = \scriptsize, right = 1pt of l5.east, anchor = mid west] (t5) {bahama mama};
\node[labnode, yafcolor3, right = 1pt of t5.mid east, anchor = west] (l3) {};
\node[inner sep = 1pt, font = \scriptsize, right = 1pt of l3.east, anchor = mid west] (t3) {harvest moon};
\node[labnode, yafcolor4, right = 1pt of t3.mid east, anchor = west] (l4) {};
\node[inner sep = 1pt, font = \scriptsize, right = 1pt of l4.east, anchor = mid west] (t4) {tequila sunrise};
\node[labnode, yafcolor1, below = 2pt of l5.south, anchor = north] (l1) {};
\node[inner sep = 1pt, font = \scriptsize, right = 1pt of l1.east, anchor = mid west] (t1) {screwdriver};
\node[labnode, yafcolor2, right = 1pt of t1.mid east, anchor = west] (l2) {};
\node[inner sep = 1pt, font = \scriptsize, right = 1pt of l2.east, anchor = mid west] (t2) {vesper};
%\node[labnode, yafcolor3, label={[font = \scriptsize, text = black]right:harvest moon}, right = 1.4cm of l5] (l3)  {};
%\node[labnode, yafcolor4, label={[font = \scriptsize, text = black]right:tequila sunrise}, right = 1.4cm of l3] (l4) {};
%\node[labnode, yafcolor1, label={[font = \scriptsize, text = black]right:screwdriver}, right = 1.4cm of l4] (l1) {};
%\node[labnode, yafcolor2, label={[font = \scriptsize, text = black]right:vesper}, right = 1.4cm of l1] (l2) {};

\end{tikzpicture}

\caption{\label{figure:snippet}A snippet of the discovered network for the \cocktails\ dataset.}
\end{figure}
\begin{figure}[t]
\begin{center}
\begin{tikzpicture}[baseline]
\begin{axis}[xlabel={base degree}, ylabel= {reconstructed degree},
    height = 5cm,
    width = \columnwidth,
    cycle list name=yaf,
    xmin = 0,
	legend entries = {\starprb, \densprb},
	legend pos = north west
    ]

%\addplot[yafcolor2, only marks, mark=*, mark size = 0.5pt] table[x index = 0, y index = 1, header = false]  {cocktails_degrees.dat};

\addplot[yafcolor2, only marks, mark=*, mark size = 0.8pt,
	error bars/.cd, y dir = plus, y explicit, error bar style = {densely dotted, black}]
	table[x index = 0, y index = 1, y error expr = {\thisrowno{2} - \thisrowno{1}}, header = false]  {cocktails_degrees.dat};
\addplot[yafcolor1, only marks, mark=x, mark size = 2pt, mark options = {semithick, opacity = 0.8},] table[x index = 0, y index = 2, header = false]  {cocktails_degrees.dat};

\pgfplotsextra{\yafdrawaxis{0}{200}{0}{200}}
\end{axis}
\end{tikzpicture}
\end{center}
\caption{\label{figure:cocktail-degrees}Vertex degree of reconstructed networks as a function of the vertex
degree in the base network in the {\cocktails} dataset. 
% The base network is constructed by connecting two ingredients sharing at least one recipe. The dotted lines connect the same vertices.
}
\vspace{-2mm}
\end{figure}
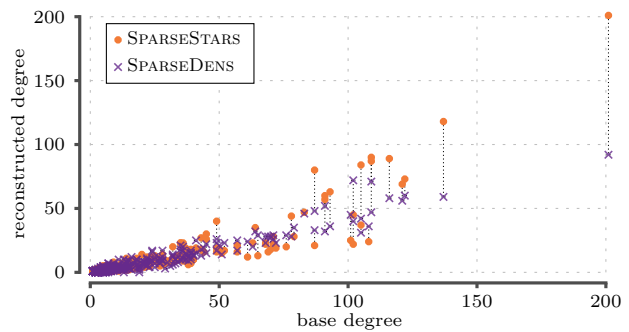

\iffalse

In Table \ref{tab:edges} we contrast number of edges, found by different algorithms, against corresponding randomized baseline. Algorithm \starrandom~picks a random star center vertex in each set, while \densityrandom~adds random edges until density requirement is fulfilled for every set. In experiments for \densprb Problem we set $\alpha_i$ to be density of the ground-truth subgraphs, induced by the set $S_i$. We also introduce parameter $\epsilon\leq 1$ and relax densities of the induced subgraphs to be $\alpha_i*\epsilon$. According to our experiments \stargreedy~algorithm has running time 2-3 times less than \matching, however \matching~always outperform \stargreedy. Algorithms \densitysort~and \densitygreedy~ require building the same inverted indexes as data preprocessing, which results in a negligible difference in the total running time, while \densitygreedy~retrieves a smaller number of edges. For all reported datasets the total running time of \matching is under 1 second, while for \densitygreedy it is under 5 minutes.   
\fi

\section{Related work}
\label{sec:related}

To the best of our knowledge, we are the first to introduce and study
the {\densprb} and {\starprb} problems.  
As a result, the problem definitions,
technical results, and algorithms presented in this paper are novel.
However, our problems are clearly related to 
\emph{network sparsification} and \emph{network design} problems. 
% and \emph{hidden-network inference}.
%Below we summarize the existing work in these areas and we highlight 
%the key differences and connections between that work ours.

\spara{Network sparsification:} 
Existing work on network sparsification aims to simplify the input network 
--- by removing edges --- such that
the remaining network maintains some of the properties of the original network. 
Such properties include
shortest paths and connectivity~\cite{elkin2005approximating,zhou10network},
cuts~\cite{ahn12graph,fung11general},
source-to-sink flow~\cite{misiolek06two}, 
spectral properties~\cite{ahn13spectral,batson13spectral,spielman11graph}, 
modularity~\cite{arenas2007size}, as well as
information-propagation pathways~\cite{mathioudakis11sparsification}.
Other work focuses on sparsification that improves network visualization~\cite{lindner2015structure}.
The main difference of our paper and existing work is that 
we consider sparsifying the network while maintaining the structure of a given set of communities. 
Such community-aware sparsification raises new computational challenges that are distinct from the computational
problems studied in the past. 
% To the best of our knowledge we are the first to define and study this version of the network-sparsification problem.

\spara{Network design problems:} 
At a high level, 
network-design problems consider a set of constraints 
%( e.g., reachability between certain pairs of vertices)
and ask to construct a minimum-cost network that satisfies those constraints~\cite{alon06general,angluin13connectivity,gupta12online,korach08complete,korach03clustering}. 
As in our case, cost is usually formulated in terms of the number of edges, 
or total edge weight.
Many different constraints have been considered in the literature: 
reachability, % between pairs of vertices, 
connectivity, % among subsets of vertices, 
cuts, flows, etc.
% In some cases, as in this paper, offline problem versions are considered, 
% where all the constraints are given in advance; 
% in other cases, online versions are studied, where the constraints arrive in an online fashion.  
% In some cases the output network should be a subgraph of a given network, in other cases the output network can simply be a subgraph of the complete graph.
%
Among the existing work in network design, 
the most related to our paper is the work by Angluin~{\etal}~\cite{angluin13connectivity}
and by Korach and Stern~\cite{korach03clustering,korach08complete}. 
Using the notation introduced in Section~\ref{sec:prel}, 
Angluin~{\etal}\ essentially solve the {\setnetworkprb} problem 
with the~$\indicator_{\connectivity}$ property. 
Our results on the {\densprb} problem and its variant on connected {\densprb}
% (presented in Section~\ref{sec:density})
are largely inspired by the work of Angluin et al.
On the other hand, 
for the {\starprb} problem we need to introduce completely new techniques, 
as the submodularity property is not applicable in this case. 
Korach and Stern~\cite{korach03clustering,korach08complete} study the following problem: 
given a collection of sets, construct a minimum-weight {\em tree}
so that each given set defines a star in this tree.  
Clearly, this problem is related to the {\starprb} problem considered here, 
however, the tree requirement create a very significant differentiation: 
the problem studied by Korach and Stern is polynomially-time solvable, 
while~{\starprb} is \NP-hard.
In terms of real-world applications, 
while tree structures are well motivated in certain cases (e.g., overlay networks), 
they are not natural in many other (e.g., social networks).

\section{Concluding remarks}
\label{sec:conclusions}

In this paper, we have introduced {\setnetworkprb}, 
a new formulation of network sparsifcation, 
where the input consists not only of a network
but also of a set of communities.
The goal in {\setnetworkprb} is twofold: 
($i$) sparsify the input network as much as possible, 
and ($ii$) guarantee some connectedness property for the subgraphs induced by the input communities
on the sparsifiers. 
We studied two connectedness properties 
and showed that 
the corresponding instances of {\setnetworkprb} is {\NP}-hard. 
We then designed effective approximation algorithms for both problems.
Our experiments with real datasets obtained from diverse domains, 
verified the effectiveness of the proposed algorithms, in terms of the number of edges they removed. They also demonstrated that the obtained sparsified networks provide interesting insights about the
structure of the original network with respect to the input communities.
%
% To the best of our knowledge, we are the first to introduce the notion of community-aware 
% sparsification. 
%%In the future, we plan to explore applications of our methodology 
%% to network visualization and summarization.

\smallskip
\spara{Acknowledgements.}
This work was funded by 
Tekes project ``Re:Know,'' 
Academy of Finland project ``Nestor'' (286211),
EC H2020 RIA project ``SoBigData'' (654024), 
a Microsoft gift, a Nokia Faculty Fellowship,
and by NSF grants: IIS 1320542, IIS 1421759 and CAREER~1253393.

%\bibliography{thebibfile}
{
\bibliography{thebibfile-terse}

\begin{thebibliography}{10}

\bibitem{ahn12graph}
K.~J. Ahn, S.~Guha, and A.~McGregor.
\newblock Graph sketches: sparsification, spanners, and subgraphs.
\newblock In {\em {PODS}}, 2012.

\bibitem{ahn13spectral}
K.~J. Ahn, S.~Guha, and A.~McGregor.
\newblock Spectral sparsification in dynamic graph streams.
\newblock In {\em {RANDOM-APPROX}}, pages 1--10, 2013.

\bibitem{alon06general}
N.~Alon, B.~Awerbuch, Y.~Azar, N.~Buchbinder, and J.~Naor.
\newblock A general approach to online network optimization problems.
\newblock {\em {ACM} Transactions on Algorithms}, 2(4):640--660, 2006.

\bibitem{angluin13connectivity}
D.~Angluin, J.~Aspnes, and L.~Reyzin.
\newblock Network construction with subgraph connectivity constraints.
\newblock {\em J. of Comb. Optimization}, 2013.

\bibitem{arenas2007size}
A.~Arenas, J.~Duch, A.~Fern{\'a}ndez, and S.~G{\'o}mez.
\newblock Size reduction of complex networks preserving modularity.
\newblock {\em New Journal of Physics}, 9(6):176, 2007.

\bibitem{batson13spectral}
J.~D. Batson, D.~A. Spielman, N.~Srivastava, and S.~Teng.
\newblock Spectral sparsification of graphs: theory and algorithms.
\newblock {\em CACM}, 56(8):87--94, 2013.

\bibitem{elkin2005approximating}
M.~Elkin and D.~Peleg.
\newblock Approximating $k$-spanner problems for $k>2$.
\newblock {\em Theoretical Computer Science}, 337(1):249--277, 2005.

\bibitem{Farine150057}
D.~Farine.
\newblock The role of social and ecological processes in structuring animal
  populations.
\newblock {\em Royal Society Open Science}, 2(4), 2015.

\bibitem{fung11general}
W.~S. Fung, R.~Hariharan, N.~J. Harvey, and D.~Panigrahi.
\newblock A general framework for graph sparsification.
\newblock In {\em STOC}, 2011.

\bibitem{garey79computers}
M.~Garey and D.~Johnson.
\newblock {\em {Computers and intractability: a guide to the theory of
  NP-completeness}}.
\newblock WH Freeman \& Co., 1979.

\bibitem{gupta12online}
A.~Gupta, R.~Krishnaswamy, and R.~Ravi.
\newblock Online and stochastic survivable network design.
\newblock {\em {SIAM} Journal of Computing}, 41(6):1649--1672, 2012.

\bibitem{korach03clustering}
E.~Korach and M.~Stern.
\newblock The clustering matroid and the optimal clustering tree.
\newblock {\em Math. Program.}, 98(1-3):385--414, 2003.

\bibitem{korach08complete}
E.~Korach and M.~Stern.
\newblock The complete optimal stars-clustering-tree problem.
\newblock {\em Discrete Applied Mathematics}, 156(4):444--450, 2008.

\bibitem{lindner2015structure}
G.~Lindner, C.~L. Staudt, M.~Hamann, H.~Meyerhenke, and D.~Wagner.
\newblock Structure-preserving sparsification of social networks.
\newblock In {\em ASONAM}, 2015.

\bibitem{mathioudakis11sparsification}
M.~Mathioudakis, F.~Bonchi, C.~Castillo, A.~Gionis, and A.~Ukkonen.
\newblock Sparsification of influence networks.
\newblock In {\em KDD}, 2011.

\bibitem{mestre2006greedy}
J.~Mestre.
\newblock Greedy in approximation algorithms.
\newblock In {\em ESA}, 2006.

\bibitem{misiolek06two}
E.~Misiolek and D.~Z. Chen.
\newblock Two flow network simplification algorithms.
\newblock {\em IPL}, 97(5):197--202, 2006.

\bibitem{satuluri2011local}
V.~Satuluri, S.~Parthasarathy, and Y.~Ruan.
\newblock Local graph sparsification for scalable clustering.
\newblock In {\em SIGMOD}, 2011.

\bibitem{spielman11graph}
D.~A. Spielman and N.~Srivastava.
\newblock Graph sparsification by effective resistances.
\newblock {\em {SIAM} J. Comput.}, 40(6):1913--1926, 2011.

\bibitem{Wolsey:1982jx}
L.~Wolsey.
\newblock {An analysis of the greedy algorithm for the submodular set covering
  problem.}
\newblock {\em Combinatorica}, 2(4):385--393, 1982.

\bibitem{zhou10network}
F.~Zhou, S.~Mahler, and H.~Toivonen.
\newblock Network simplification with minimal loss of connectivity.
\newblock In {\em ICDM}, 2010.

\end{thebibliography}
}

\clearpage 

%\newtheorem{proposition:density-nphardness}{Proposition \ref{proposition:density-nphardness}}
%\begin{proposition:density-nphardness}
%The \densprb\ problem is \NP-hard.
%\end{proposition:density-nphardness}

\section*{\LARGE Supplementary material}
\bigskip

\section*{\Large Community-aware \\ network sparsification}
\bigskip

\section*{Aristides Gionis \\ 
Polina Rozenshtein \\
Nikolaj Tatti \\
Evimaria Terzi }
\bigskip
\bigskip
\bigskip

\section*{S1.$\quad$Proofs} 

\begin{proof}[Proof of Proposition~\ref{proposition:density-nphardness}]
We consider the decision version of \densprb.
The problem is obviously in \NP.  
To prove the hardness we provide a reduction from from the \hitprb\ problem. 
An instance of \hitprb\ consists of a universe of items $U$, 
a collection of sets $C_i\subseteq U$, 
and a positive integer $c$.  
The task is to decide whether there exists a ``hitting set'' $X\subseteq U$, 
of cardinality at most $c$,  
such that $C_i\cap X\neq \emptyset$, for every $i$.

Consider an instance of the {\hitprb} problem. 
We will show how to obtain a solution for this instance, using \densprb.
We proceed as follows. 
First we create a graph $G_0=(V,E_0)$, 
such that $|V|=|U| + 1$: 
for every item $u\in U$ we create a vertex $x_u\in V$,  
in addition we add one special vertex $s\in V$.  
The graph is fully connected, $\abs{E_0} = {\abs{V} \choose 2}$.

Now for every set $C_i$ in the instance of {\hitprb} we create
a set of vertices $S_i$ for our problem such that $S_i=\set{s} \cup \set{x_u \mid u\in C_i}$. 
We also set $\alpha_i = \alpha = {\abs{V} \choose 2}^{-1}$. Note that $\alpha$ is so low
that to satisfy the constraint $\indicator_{\density\ge\alpha}(G,S_i)$
it is sufficient to have $E(S_i) \geq 1$.

Let $G= (V, E)$ be a solution for {\densprb}, if one exists.
We can safely assume that each edge in $E$ is adjacent to $s$.
To see this, assume that $e = (x_u, x_v) \in E$ and modify 
$E$ by adding $(x_u, s)$, if not already in $E$, and deleting $e$.
By doing this swap, we do not violate any constraint since any $S_i$ that contains $e$ will also contain $(x_u, s)$
and having one edge is enough to satisfy the constraint. 
Moreover, we do not increase the number of edges in $E$.

Using this construction, we can see that the adjacent vertices in $E$, excluding $s$, correspond to the hitting set;
that is,
there exists a solution to the {\hitprb} problem of
cardinality at most $c$
if and only if there exists a solution to {\densprb} that uses at most $c$ edges.
\qed
\end{proof}

It is interesting to observe that
our proof implies that the \densprb\ problem is \NP-hard 
even if the feasibility network $G_0$ is fully-connected.

%\newtheorem{prop:submodularity}{Proposition \ref{prop:submodularity}}
%\begin{prop:submodularity}
%The potential function \densitypotential\ is monotone and submodular.
% with respect to the set of edges~$E'$. 
%\end{prop:submodularity}
%\statsubmodular*

\begin{proof}[Proof of Proposition~\ref{prop:submodularity}]
Showing that \densitypotential\ is monotone is quite straightforward, 
so we focus on submodularity. 
%% Assume any two sets of edges $X\subseteq Y\subseteq E_0$. 
%% Clearly, $\densitypotential(X) \leq \densitypotential(Y)$.
We need to show that 
for any set of edges $X\subseteq Y\subseteq E_0$ 
and any edge $e \notin Y$ it is
\[
\densitypotential(Y\cup\{e\})-\densitypotential(Y) \leq \densitypotential(X\cup\{e\})-\densitypotential(X).
\]
Since \densitypotential\ is a summation of terms, 
as per Equation~\eqref{eq:densitypotential}, 
it is sufficient to show that each individual term is submodular. 
Thus, we need to show that for any $S_i\in\sets$, 
$X\subseteq Y\subseteq E_0$, and $e\notin Y$ it is
\[
\densitypotential(Y\cup\{e\},S_i)-\densitypotential(Y,S_i) \leq 
\densitypotential(X\cup\{e\},S_i)-\densitypotential(X,S_i).
\]
To show the latter inequality, 
first observe that for any 
$S_i, Z\subseteq E_0$ and $e\notin Z$ the difference 
$\densitypotential(Z\cup\{e\},S_i)-\densitypotential(Z,S_i)$ 
is either $0$ or $1$. 
%Value of $0$ means that either $Z$ satisfies the density constraint for
%$S_i$ or that edge $e$ is not in $Z(S_i)$. 
Now fix $S_i$, $X\subseteq Y\subseteq E_0$, and $e\notin Y$;
if $\densitypotential(X\cup\{e\},S_i)-\densitypotential(X,S_i)=0$, 
either the set of edges $X$ satisfy the density constraint on $S_i$,
or $e$ is not incident in a pair of vertices in $S_i$. 
In the latter case, 
$\densitypotential(Y\cup\{e\},S_i)-\densitypotential(Y,S_i)=0$, as well.
In the former case, if $X$ satisfies the density constraint, 
since $X\subseteq Y$, then 
the set of edges $Y$ should also satisfy the density constraint, and
thus 
$\densitypotential(Y\cup\{e\},S_i)-\densitypotential(Y,S_i)=0$. 
%The proposition follows. 
% Thus $\densitypotential(E',S_i)$ is submodular
% and so is $\densitypotential(E')$ as sum of submodular functions (Equation~\eqref{eq:densitypotential}).
\qed
\end{proof}

%\newtheorem{prop:starhardness}{Proposition \ref{prop:starhardness}}
%\begin{prop:starhardness}
%The \starprb\ problem is \NP-hard, 
%even when the feasi\-bi\-lity network $G_0$ is fully connected.
%\end{prop:starhardness}
%\statstarnphard*

\begin{proof}[Proof of Proposition~\ref{prop:starhardness}]
We consider the decision version of the \starprb\ problem.
Clearly the problem is in \NP. 
To prove the completeness we will obtain a reduction from the {\matchprb} problem, 
the 3-dimensional complete matching problem~\cite{garey79computers}. 
An instance of {\matchprb}
consists of three disjoint finite sets $X$, $Y$, and $Z$, having the same size, 
and a collection of $m$ sets $\mathcal{C} = \{C_1, \ldots, C_m\}$ 
containing exactly one item from $X$, $Y$, and $Z$, so that $\abs{C_i} = 3$. 
The goal is to decide whether there exists a subset of $\mathcal{C}$ where each set
is disjoint and all elements in $X$, $Y$, and $Z$ are covered.

Assume an instance of {\matchprb}. 
For each $C_i$ create four vertices $p_i$, $u_i$, $v_i$, and $w_i$. 
Set the network $G_0=(V, E_0)$ to be a fully connected graph over all those vertices. 
Define $P = \set{p_i}$, the set of $p_i$'s.
For each $x \in X$, create a set $S_x = \set{p_i, u_i, v_i \mid x \in C_i}$.
Similarly, for each $y \in Y$, create a set $S_y = \set{p_i, u_i, w_i \mid y \in C_i}$
and, for each $z \in Z$, create a set $S_z = \set{p_i, v_i, w_i \mid z \in C_i}$.
Let $\sets$ consist of all these sets.

Let $G=(V,E)$ be the optimal solution to the {\starprb} problem; 
such a solution will consist of induced subgraphs $G_i=\left(S_i,E(S_i)\right)$ that contain a star.
Let $\mu$ be the function mapping each $S_i$ to a vertex that acts as a center of the star defined by $G_i$.
Let
$O = \set{\mu(S_i); S_i \in \sets}$ be the set of these center vertices in the optimal solution.
We can safely assume that  $O \subseteq P$; 
even if in the optimal solution there exists an $S_i$ with $E(S_i)$ not intersecting with any 
other $E(S_j)$, $p_i$ can be picked as the center of
this star. For each $o \in O$, define $\mathcal{N}_o = \set{S_i \in \sets \mid \mu(S_i) = o}$.

The number of edges $\abs{E}$ in the optimal graph $G=(V,E)$
is equal to $\sum_{S_i \in \sets} (\abs{S_i} - 1)-D$, 
where $D$ is the number the edges that are counted in more than one star. 
To account for this double counting we proceed as follows: 
if $\mathcal{N}_o$ contains two sets, then 
there is one edge adjacent to $o$ that is counted twice. 
If $\mathcal{N}_o$ contains three sets, then
there are three edges adjacent to $o$ that are counted twice. 
This leads to
\[
\abs{E} = \sum_{S_i \in \sets} (\abs{S_i} - 1) - 
\sum_{o \in O} \left(\mathbb{I}_{[\abs{\mathcal{N}_o} = 2]} + 3\mathbb{I}_{[\abs{\mathcal{N}_o} = 3]} \right),
\]
where $\mathbb{I}$ is the indicator function with $\mathbb{I}_{[A]} = 1$ if the statement $A$ is true, 
and $0$ otherwise.

To express the number of edges solely with a sum over the sets,
we define a function $f$ as $f(1) = 0$, $f(2) = 1/2$ and $f(3) = 1$. Then
\[
\abs{E} =  \sum_{S_i \in \sets} \left(\abs{S_i} - 1  - f(\abs{\mathcal{N}_{\mu(S_i)}})\right).
\]

Let $\mathcal{Q} \subseteq \mathcal{C}$ be the set $3$-dimensional edges 
corresponding to the set of selected star centers $O$.
Set $t = \sum_{S_i \in \sets} \left(\abs{S} - 2\right)$. 
Then $\abs{E} \leq t$ if and only if
every $\mathcal{N}_o$ contains 3 sets,  
which is equivalent to $\mathcal{Q}$ containing disjoint sets that cover $X$ and $Y$ and $Z$.
\qed
\end{proof}

%\newtheorem{observation:relationship}{Observation \ref{observation:relationship}}
%\begin{observation:relationship}
%If $G^\ast = (V,E^\ast)$ is the optimal solution for {\starprb}, 
%and $N^\ast=(V,D^\ast)$ is the optimal solution for {\distarprb},  
%for the same input, 
%then $\abs{E^{\ast}} \leq \abs{D^\ast}\leq 2\, \abs{E^{\ast}}$.
%\end{observation:relationship}
%\statrelationship*

\begin{proof}[Proof of Observation~\ref{observation:relationship}]
The first part of the inequality
follows from the fact that any solution 
$(V,D)$ for {\distarprb}
can be translated to a solution for {\starprb} by simply ignoring the edge directions
and removing duplicates, if needed. 
The second part of the inequality follows from the fact that any solution $(V,E)$ for {\starprb}
can be translated to a feasible solution for {\distarprb}, 
with at most two times as many edges, 
by creating two copies each edge $(x,y)$ in $E$: 
one for $(x \rightarrow y)$ and one for $(y \rightarrow x)$. 
\qed
\end{proof}

%\newtheorem{prop:optimal}{Proposition \ref{prop:optimal}}
%\begin{prop:optimal}
%Let $\mathcal{I}$ be the optimal solution
%for the {\hyperprb} problem instance. 
%Let $D_\mathcal{I}$ be the union of directed stars, 
%each star having a center in $\intersection(H)$
%and directed edges towards vertices in $\union(H)$ for every $H \in \mathcal{I}$. 
%If $D^\ast$ is the optimal solution to the 
%{\distarprb} problem, then 
%% $D^\ast=D_\mathcal{I}$  and 
%\[
	 %\abs{D^\ast} = \abs{D_\mathcal{I}}=\sum_{S_i\in\sets}(|S_i| - 1)  - \sum_{H \in \mathcal{I}} c(H).
%\]
%\end{prop:optimal}

%\statoptimal*

To prove Proposition~\ref{prop:optimal} we will use the following lemma which we state without the proof.

\begin{lemma}
\label{lem:star}
Let $H \in \mathcal{H}$ be a hyper-edge and let $T$ be a star
with the center $x$ in $\intersection(H)$ connecting to every vertex
in $\union(H) \setminus \set{x}$. The number of edges in $T$ is equal to
\[
	\sum_{C \in H} (\abs{C} - 1) - c(H).
\]
\end{lemma}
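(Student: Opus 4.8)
The plan is to compute both sides directly and verify they agree; the whole argument reduces to a single double-counting identity. First I would observe that, since $x \in \intersection(H) \subseteq \union(H)$, the star $T$ has exactly $\abs{\union(H)} - 1$ edges: one edge joining the center $x$ to each remaining vertex of $\union(H)$. Thus the task reduces to establishing the identity
\[
\sum_{C \in H}(\abs{C} - 1) - c(H) = \abs{\union(H)} - 1.
\]

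For the left summation I would introduce, as a local shorthand, the multiplicity $d(v) = \abs{\set{C \in H \mid v \in C}}$ of each $v \in \union(H)$, namely the number of member communities of $H$ that contain $v$. This is precisely the quantity $\abs{\set{i \mid v \in \community_i \in H}}$ appearing in the definition of $c$. Counting incidences between members of $H$ and their elements in two ways yields $\sum_{C \in H} \abs{C} = \sum_{v \in \union(H)} d(v)$, and hence $\sum_{C \in H}(\abs{C} - 1) = \sum_{v \in \union(H)} d(v) - \abs{H}$.

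For the score term I would expand its definition directly, reading the trailing $-1$ as sitting inside the sum so that each vertex contributes $d(v) - 1$:
\[
c(H) = 1 - \abs{H} + \sum_{v \in \union(H)} \bigl(d(v) - 1\bigr) = 1 - \abs{H} + \sum_{v \in \union(H)} d(v) - \abs{\union(H)}.
\]
Subtracting this from the previous expression, the $\sum_{v} d(v)$ terms cancel and the two $\abs{H}$ terms cancel, leaving exactly $\abs{\union(H)} - 1$, which matches the edge count of $T$ and closes the argument.

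The computation is entirely routine; the only point requiring care is the parsing of the score function, where the trailing $-1$ belongs inside the sum over $\union(H)$ rather than outside it. Getting this grouping right is what makes the cancellation exact, since a misreading would shift the identity by $\abs{\union(H)}$. Once the double-counting step is set up correctly, no genuine obstacle remains.
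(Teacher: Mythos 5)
Your proof is correct. Note that the paper itself states this lemma \emph{without} proof (it is used only as a stepping stone toward Proposition~\ref{prop:optimal}), so there is no in-paper argument to compare against; your double-counting derivation supplies exactly the missing argument. The one substantive judgment call you made — reading the trailing $-1$ in the definition of $c(H)$ as sitting \emph{inside} the sum, so that each $v \in \union(H)$ contributes $\abs{\set{i \mid v \in \community_i \in H}} - 1$ — is the correct resolution of the paper's ambiguous typesetting, and you can confirm it independently of the lemma: with the $-1$ outside the sum one would get $c(H) = \sum_{C \in H}(\abs{C}-1)$ identically, making every merged star cost zero edges and rendering Proposition~\ref{prop:optimal} and Corollary~\ref{cor:dirbound} vacuous, whereas with your reading a singleton $H = \set{\community_i}$ gets $c(H) = 0$ and a pair $H = \set{\community_1, \community_2}$ gets $c(H) = \abs{\community_1 \cap \community_2} - 1$, which is precisely the edge saving from merging two stars at a common center. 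The rest of your computation — the incidence count $\sum_{C \in H} \abs{C} = \sum_{v \in \union(H)} d(v)$ and the cancellation leaving $\abs{\union(H)} - 1$ — is routine and correct.
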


Now we are ready to prove Proposition~\ref{prop:optimal}.

\begin{proof}[Proof of Proposition~\ref{prop:optimal}]
Let us first prove $\abs{D^\ast} \leq \abs{D_\mathcal{I}}$.
By definition,
$\mathcal{H}$ contains all sets $C_i$ as singleton groups.
Therefore, each set $C_i$ is included in some $H \in \mathcal{I}$.
Hence, $D_\mathcal{I}$ is a feasible solution for {\distarprb} and
therefore $\abs{D^\ast} \leq \abs{D_\mathcal{I}}$.

We will now prove the other direction.
By definition, $D^\ast$ is a union of stars $\set{T_i}$,
where each $T_i = (C_i, A_i)$.  Define a family $\mathcal{P}$ by grouping each
$C_i$ sharing the same star center.  Note that $\mathcal{P}$ is a disjoint subset
of $\mathcal{H}$, consequently, it is a feasible solution for \hyperprb.
Lemma~\ref{lem:star} implies that
\[
\begin{split}
\abs{D^\ast} & = \sum_{H \in \mathcal{P}} \sum_{C \in H} \abs{C} - 1 - c(H) \\
& = \sum_{C \in \mathcal{C}} (\abs{C} - 1) - \sum_{H \in \mathcal{P}} c(H) \\
& \geq \sum_{C \in \mathcal{C}} (\abs{C} - 1) - \sum_{H \in \mathcal{I}} c(H), \\
\end{split}
\]
where the first equality follows from the fact that the joined trees are edge-disjoint.

Lemma~\ref{lem:star} implies that
\[
\abs{D_{\mathcal{I}}}
\leq \sum_{H \in \mathcal{I}} \sum_{C \in H} \abs{C} - 1 - c(H)
= \sum_{C \in \mathcal{C}} (\abs{C} - 1) - \sum_{H \in \mathcal{I}} c(H),
\]
where the last equality follows since
each set $C_i$ is included in some $H \in \mathcal{I}$.
\qed
\end{proof}

\begin{proof}[Proof of Proposition~\ref{prop:kapprox}]
The set of feasible solutions of the {\hyperprb} problem
forms a $k$-extensible system~\cite{mestre2006greedy}. 
As shown by Mestre~\cite{mestre2006greedy}, 
the greedy algorithm provides a $k$-factor approximation
to the problem of finding a solution with the maximum weight in a $k$-extensible system.
\qed
\end{proof}

\iffalse
\newtheorem{prop:undirbound}{Proposition \ref{prop:undirbound}}
\begin{prop:undirbound}
Let $E^\ast$ be the optimal solution of the {\starprb} problem.
Let $\mathcal{J}$ be the greedy solution to the corresponding {\hyperprb} problem,
and let $E$ be the output of the {\em\matching} algorithm. Then,
\[
	\abs{E} \leq \frac{k - 1}{k} C + \frac{2}{k}\abs{E^\ast},
\]
where $C = \sum_{S_i\in\sets} (\abs{S_i} - 1)$ and $k$ is the maximum number of sets in
$\sets$ that have a non-empty intersection.
\end{prop:undirbound}
\fi

\begin{proof}[Proof of Proposition~\ref{prop:undirbound}]
For the solution of the {\matching} problem
we know that $\abs{E}\leq |D_\mathcal{J}|$.
This is because we can obtain $E$ from $D_\mathcal{J}$ 
by ignoring edge directions, and possibly removing edges, if needed. 
From the latter inequality, 
Observation~\ref{observation:relationship}, and Corollary~\ref{cor:dirbound},
the statement follows.
\qed
\end{proof}

\section*{S2.$\quad$Extension to weighted networks}
% \label{sec:weighted}

Our problem definition can be extended for \emph{weighted} graphs
$G=(V,E,d)$, 
where $V$ and $E$ are the sets of nodes and edges in the network. 
In this case, 
we assume that edges are weighted by a distance function $d:E\rightarrow \Reals_{+}$.
Small distances indicate strong connections
and large distances indicate weak connections. 
The distance of an edge $e\in E$ is denoted by $d(e)$, 
while the total distance of a set of edges $E'\subseteq E$
is defined as $d(E')=\sum_{e\in E'}d(e)$.
% An unweighted network, i.e., $d(e)=1$ for all $e\in E$, is a special case of our setting.
%
% In a real-world application, 
% edge distances can be estimated
% by different characteristics of the social network and user activity, 
% such as, 
% user interaction, 
% user attributes, 
% number of common neighbors, etc.
%
Given such a weighted network, 
we can extend the definition of the {\setnetworkprb} problem
as follows:
\begin{problem}[{\weightedproblem}]
\label{problem:weighted}
Consider an underlying network $G=(V, E, d)$ , 
where $d$ represent edge distances, 
and let $\Property$ be a graph property.
Given a set of $\ell$ communities
$\communities = \{\community_1,\ldots, \community_\ell\}$, 
we want to construct a {\em sparse network} $G' = (V, E')$, such that, 
($i$) $E' \subseteq E$; 
($ii$) $\indicator_\Property(G',\community_i) = 1$, for all $\community_i\in\communities$;  
and 
($iii$) the sum of distances of edges in the sparse network, 
$d(E')=\sum_{e\in E'}d(e)$, is minimized. 
\end{problem}

As before, 
depending on whether $\Property$ is the connectivity, 
the density or the star-containment property,
we get the corresponding weighted versions of the 
{\connectedprb}, {\densprb} and {\starprb} problems respectively.
The greedy algorithms developed for the {\connectedprb} and {\densprb} problems can be also used
for their weighted counterparts.
In particular, in the greedy step of the algorithm the next edge is chosen
so as to maximize the potential difference
\[
\frac{\densitypotential(E'\cup\{e\}) - \densitypotential(E')}{d(e)}. 
\]
However, the algorithm we give for {\starprb}
is only applicable to unweighted networks; 
developing a new algorithm for the weighted case is left as future work.

\section*{S3.$\quad$Birds case study}

We present a second case study 
where the input communities are group sightings of birds.
We run the {\matching} algorithm on the input sets, 
and we obtain a sparsified network with $809$ star centers and $21\,077$ edges, 
that is, around $47$\,\% of the edges of input network.  
The dataset also contains gender (male/female/unknown), 
age (juvenile/adult), and immigration status of each individual bird. 
We studied whether
some characteristics are favoured when selecting centers. Here, we found out
that juveniles are preferred as centers, as well as, male residents, 
see Figure~\ref{figure:birds-stats}.

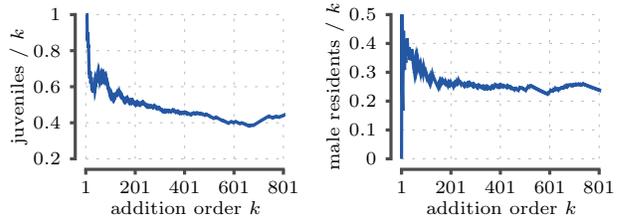
\begin{figure}[b]
\begin{center}
\begin{tikzpicture}[baseline]
\begin{axis}[xlabel={addition order $k$}, ylabel= {juveniles / $k$},
    height = 3.5cm,
    width = 0.5\columnwidth,
    cycle list name=yaf,
    xmin = 1,
    ymin = 0.2,
    ymax = 1,
	xtick = {1, 201, 401, 601, 801}
    ]

\addplot[yafcolor5, no markers]
	table[x expr = {\coordindex + 1}, y index = 0, header = false]  {birds_stats.dat};

\pgfplotsextra{\yafdrawaxis{1}{800}{0.2}{1}}
\end{axis}
\end{tikzpicture}
\begin{tikzpicture}[baseline]
\begin{axis}[xlabel={addition order $k$}, ylabel= {male residents / $k$},
    height = 3.5cm,
    width = 0.5\columnwidth,
    cycle list name=yaf,
    xmin = 1,
    ymin = 0,
    ymax = 0.5,
	ytick = {0, 0.1, 0.2 , 0.3, 0.4, 0.5},
	xtick = {1, 201, 401, 601, 801}
    ]

\addplot[yafcolor5, no markers]
	table[x expr = {\coordindex + 1}, y index = 1, header = false]  {birds_stats.dat};

\pgfplotsextra{\yafdrawaxis{1}{800}{0}{0.5}}
\end{axis}
\end{tikzpicture}
\end{center}
\caption{\label{figure:birds-stats}
Proportion of juveniles and male residents in top-$k$ selected star centers in \birds as a function of $k$. }
\end{figure}

\end{document}